\title{Quantum encryption with certified deletion}
\author{Anne Broadbent and Rabib Islam%
\footnote{University of Ottawa, Department of Mathematics and Statistics; \texttt{\{abroadbe,risla028\}@uottawa.ca}.}
}
\date{}
\begin{document}
\maketitle

\begin{abstract}

Given a ciphertext, is it possible to prove the \emph{deletion} of the underlying plaintext?
Since classical ciphertexts can be copied, clearly such a feat is impossible using classical information alone.
In stark contrast to this, we show that quantum encodings enable \emph{certified deletion}.
More precisely, we show that it is possible to encrypt classical data into a quantum ciphertext such that the recipient of the ciphertext can produce a \emph{classical} string which proves to the originator that the recipient has relinquished any chance of recovering the plaintext should the decryption key be revealed.
Our scheme is feasible with current quantum technology: the honest parties only require quantum devices for single-qubit preparation and measurements; the scheme is also robust against noise in these devices. Furthermore, we provide an analysis that is suitable in the finite-key regime.

\end{abstract}

\section{Introduction}

Consider the following scenario: Alice sends a ciphertext to Bob, but in addition,
she wants to encode the data in a way such that Bob can prove to her that he \emph{deleted} the information contained in the ciphertext. Such a
deletion should prevent Bob from retrieving any information on the encoded plaintext once the key is revealed.
We call this \emph{certified deletion}.

Informally, this functionality stipulates that Bob should not be able to do the following two things simultaneously:
\begin{inparaenum}[(1)]
    \item Convince Alice that he has deleted the ciphertext; and
    \item Given the key, recover information about the encrypted message.
\end{inparaenum}
To better understand this concept, consider an analogy to certified deletion in the physical world: ``encryption''  would correspond to locking information into a keyed safe, the ``ciphertext'' comprising of the locked safe. In this case, ``deletion'' may simply involve  returning the safe in its original state.
This ``deletion'' is intrinsically certified since, without the safe (and having never had access to the key and the safe at the same time), Bob is relinquishing the possibility of gaining access to the information (even in the future when the key may be revealed) by returning the safe.
However, in the case that encryption is digital, Bob may retain a copy of the ciphertext; there is therefore no meaningful way for him to certify ``deletion'' of the underlying information, since clearly a copy of the ciphertext is just as good as the original ciphertext, when it comes time to use the key to decrypt the data.

Quantum information, on the other hand, is known for its no-cloning principle~\cite{Die82, Par70, WZ82}, which states that quantum states cannot, in general, be copied.
This quantum feature has been explored in many cryptographic applications, including unforgeable money~\cite{Wie83}, quantum key distribution (QKD)~\cite{BB84}, and more (for a survey, see~\cite{BS16}).

\subsection{Summary of Contributions}\label{sec:intro-summary-contributions}

In this work, we add to the repertoire of functionalities that are classically impossible but are achievable with unconditional security by means of quantum information.
We give a formal definition of certified deletion encryption and certified deletion security.
Moreover, we construct an encryption scheme which, as we demonstrate, satisfies these notions (in addition, our proofs are applicable in the finite-key regime).
Furthermore, our scheme is technologically simple since it can be implemented by honest parties who have access to rudimentary quantum devices (that is, they only need to prepare single-qubit quantum states, and perform single-qubit measurements); we also show that our scheme is robust against noise in these devices.
We now elaborate on these contributions.

\subsubsection{Definitions}\label{sec:intro-definitions}

In order to define our notion of encryption, we build on the \emph{quantum encryption of classical messages} (QECM) framework~\cite{BL20}\footnote{Apart from sharing this basic definition, our work differs significantly from~\cite{BL20}.
For instance, the adversarial models are fundamentally different, since we consider here a single adversary, while~\cite{BL20} is secure against \emph{two} separate adversaries.}
(for simplicity, our work is restricted to the single-use, private-key setting).
To the QECM, we add a \emph{delete} circuit which is used by Bob if he wishes to delete his ciphertext and generate a corresponding verification state, and a \emph{verify} circuit which uses the key and is used by Alice to determine whether Bob really deleted the ciphertext.

Next, we define the notion of certified deletion security for a QECM scheme (See \cref{fig:cer-del} and \cref{def:cds}).
Our definition is inspired by elements of the definition in~\cite{Unr14}.
The starting point for this definition is the well-known indistinguishability experiment, this time played between an adversary~$\lA=(\lA_0, \lA_1, \lA_2)$ and a challenger.
After running the Key Generation procedure, the adversary $\lA_0$ submits an $n$-bit plaintext~$\textsf{msg}_0$ to the challenger.
Depending on a random bit~$b$, the challenger either encrypts~$\textsf{msg}_0$ or a dummy plaintext~$0^n$, and sends the ciphertext to~$\lA_1$.
The adversary~$\lA_1$ then produces a candidate classical ``deletion certificate'', $y$. Next, the key is sent to the adversary~$\lA_2$ who produces an output bit~$b' \in \{0,1\}$.\footnote{The key is leaked \emph{after} $y$ is produced; this is required because otherwise, with access to the ciphertext and the key, the adversary could (via purification) retrieve the plaintext without affecting the ciphertext, and therefore could decrypt while simultaneously producing a convincing proof of deletion.}
A scheme is deemed \emph{secure} if the choice of $b$ does not change the probability of the following event: ``$b' = 1$ \emph {and} the deletion certificate $y$ is accepted''.
We note that it would be incorrect to formulate a definition that conditions on~$y$ being accepted (see discussion in~\cite{Unr14}).
We note that certified deletion security does not necessarily imply ciphertext indistinguishability; hence these two properties are defined and proven separately.

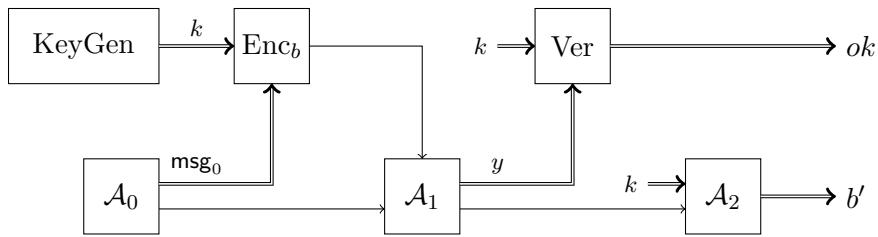
\begin{figure}[h]
    \centering
    \begin{tikzpicture}
        \draw (0,0) rectangle (2,-1);
        \node at (1,-0.5) {KeyGen};
        \draw[double, ->] (2,-1/2) -- (3,-1/2);
        \node[above] at (2.5,-1/2) {\footnotesize{$k$}};
        \draw (3,0) rectangle (4,-1);
        \node at (3.5,-0.5) {$\textrm{Enc}_b$};
        \draw[->] (4,-1/2) -- (5.5,-1/2) -- (5.5,-2);
        \draw[double, ->] (6.5,-1/2) -- (7,-1/2);
        \node[left] at (6.5,-1/2) {\footnotesize{$k$}};
        \draw (7,0) rectangle (8,-1);
        \node at (7.5,-1/2) {Ver};
        \draw[double, ->] (8,-0.5) -- (11,-0.5);
        \node[right] at (11,-0.5) {$ok$};
        \draw (1,-2) rectangle (2,-3);
        \node at (1.5,-2.5) {$\lA_0$};
        \draw[double, ->] (2,-7/3) -- (3.5,-7/3) -- (3.5,-1);
        \node[above] at (2.5,-7/3) {\footnotesize{$\msg_0$}};
        \draw[->] (2,-8/3) -- (5,-8/3);
        \draw (5,-2) rectangle (6,-3);
        \node at (5.5,-2.5) {$\lA_1$};
        \draw[double, ->] (6,-7/3) -- (7.5,-7/3) -- (7.5,-1);
        \node[above] at (6.5,-7/3) {\footnotesize$y$};
        \draw[->] (6,-8/3) -- (9,-8/3);
        \draw[double, ->] (8.5,-7/3) -- (9,-7/3);
        \node[left] at (8.5,-7/3) {\footnotesize{$k$}};
        \draw (9,-2) rectangle (10,-3);
        \node at (9.5,-2.5) {$\lA_2$};
        \draw[double, ->] (10,-2.5) -- (11,-2.5);
        \node[right] at (11,-2.5) {$b'$};
    \end{tikzpicture}

       \caption{Schematic representation of the security notion for certified deletion security.
       The game is parametrized by $b \in \{0,1\}$ and $\textrm{Enc}_0$ outputs an encryption of~$0^n$ while $\textrm{Enc}_1$ encrypts its input, $\msg_0$.
   Security holds if for each adversary $\lA=(\lA_0, \lA_1, \lA_2)$, the probability of ($b' = 1$ \emph{and} $ok = 1$) is essentially the same, regardless of the value of $b$.}
    \label{fig:cer-del}
\end{figure}

\subsubsection{Scheme}

In \cref{sec:pm}, we  present our scheme.
Our encoding is based on the well-known Wiesner encoding~\cite{Wie83}.
Informally, the message is encoded by first generating $m$ random Wiesner states, $\ket{r}^\theta$ ($r, \theta \in \{0,1\}^m$) (for notation, see \cref{ss:notation}).
We let $r|_\lI$  be the substring of~$r$ where qubits are encoded in the computational basis, and we let $r|_{\bar{\lI}}$ be the remaining substring of~$r$ (where qubits are encoded in the Hadamard basis).
Then, in order to create a classical \emph{proof of deletion}, Bob measures the entire ciphertext in the Hadamard basis.
The result is a classical string, and Alice accepts the deletion if all the bits corresponding to positions encoded in the Hadamard basis are correct according to~$r|_{\bar{\lI}}$.
As for the message~$\msg$, it is encoded into $x'= \msg \oplus H(r|_\lI) \oplus u$, where $H$ is a two-universal hash function and $u$ is a fresh, random string.
Intuitively speaking, the  use of the hash function is required in order to prevent that \emph{partial} information retained by Bob could be useful in distinguishing the plaintext, while the random $u$ is used to guarantee security in terms of an encryption scheme.
Robustness of the protocol is achieved by using an error correcting code and including an encrypted version of the error syndrome.
We note that while our definitions do not require it, our scheme provides a further desirable property, namely that the proof of deletion is a classical string only.

\subsubsection{Proof}

In \cref{sec:sec}, we present the security analysis of our scheme and give concrete security parameters (\cref{thm:cert-del} and its proof).
First, the fact that the scheme is an encryption scheme is relatively straightforward; it follows via a generalization of the quantum one-time pad (see \cref{sec:indi-security}).
Next, correctness and robustness (\cref{sec:correctness}) follow from the properties of the encoding and of the error correcting mechanism.

Next, the proof of security for certified deletion has a number of key steps.
First, we apply the security notion of certified deletion~(\cref{def:cds}) to our concrete scheme (\cref{sch:pmcd}).
This yields a ``prepare-and-measure'' security game (see \cref{game:pm}).
However, for the purposes of the analysis, it is convenient to consider instead an entanglement-based game (this is a common proof technique for quantum protocols that include the preparation of random states \cite{LC99,SP00}).
In this game (\cref{game:epr}), the adversary, Bob, creates an initial entangled state, from which Alice derives (via measurements in a random basis $\theta$ of her choosing) the value of $r \in \{0,1\}^m$.
We show that, without loss of generality, Bob can produce the proof of deletion, $y$, \emph{before} he receives any information from Alice (this is due, essentially, to the fact that the ciphertext is uniformly random from Bob's point of view).
Averaging over Alice's choice of basis $\theta$, we arrive at a very powerful intuition:
in order for Bob's probability of creating an acceptable proof of deletion $y$ (\emph{i.e.} he produces a string where the positions corresponding to $\theta=1$ match with $ r|_{\bar{\lI}}$) to be high, he must unavoidably have a low probability of correctly guessing $r|_\lI$.
The above phenomenon is embodied in the following
\emph{entropic uncertainty relation} for smooth entropies~\cite{TR11,TLGR12}.
We consider the scenario of Eve preparing a tripartite state $\rho_{ABE}$ with Alice, Bob, and Eve receiving the $A$, $B$ and $E$ systems, respectively (here, $A$ and $B$ contain~$n$ qubits).
Next, Alice either measures all of her qubits in the computational basis to obtain string~$X$, or she measures all of her qubits in the Hadamard basis to obtain string~$Z$; meanwhile, Bob measures his qubits in the Hadamard basis to obtain~$Z'$.
We then have the relation:
   \begin{equation}
        H_{\min} ^\epsilon (X \mid E) + H_{\max} ^\epsilon (Z \mid Z') \geq n,
    \end{equation}
In the above, $\epsilon \leq 0$ is a smoothing parameter which represents a probability of failure, and the smooth min-entropy $H_{\min} ^\epsilon(X\mid E)$ characterizes the average probability that Eve guesses~$X$ correctly using her optimal strategy and given her quantum register~$E$, while the smooth max-entropy $H_{\max} ^\epsilon(Z\mid Z')$ corresponds to the number of bits that are needed in order to reconstruct $Z$ from $Z'$ up to a failure probability~$\epsilon$ (for details, see \cref{sec:entropic-uncertainty-rels}).

Our proof technique thus consists in formally analysing the entanglement-based game and applying the appropriate uncertainty relation in the spirit of the one above.
Finally, we combine the bound on Bob's min-entropy with a universal$_2$ hash function and the Leftover Hashing Lemma of~\cite{Ren05} to prove indistinguishability between the cases $b=0$ and $b=1$ after Alice has been convinced of deletion.

\subsection{Related Work}
To the best of our knowledge, the first use of a quantum encoding to certify that a ciphertext is completely ``returned'' was  developed by Unruh~\cite{Unr14} in the context of \emph{revocable timed-release encryption}\footnote{Revocable timed-release encryption can be equivalently thought of as a revocable time-lock puzzle\cite{RSW96}, which does not satisfy standard cryptographic security (since the plaintexts are recoverable, by design, in polynomial time). In contrast, here we achieve a semantic-security-type security definition.}: in this case, the revocation process is fully quantum.
Our main security definition (\cref{def:cds}) is closely related to the security definitions from this work.
On the technical side, our work differs significantly since~\cite{Unr14} uses techniques related to CSS codes and quantum random oracles, whereas we use privacy amplification and uncertainty relations.
Our work also considers the concept of ``revocation'' outside the context of timed-release encryption, and it is also a conceptual and technical improvement since it shows that a proof of deletion can be classical.
Fu and Miller~\cite{FM18} gave the first evidence that quantum information could be used to prove \emph{deletion} of information and that this could be verified using classical interaction only: they showed that, via a two-party nonlocality game (involving classical interaction), Alice can become convinced that Bob has \emph{deleted} a single-bit ciphertext (in the sense that the deleted state is unreadable even if Bob were to learn the decryption  key).
Their results are cast in the device-independent setting (meaning that security holds against arbitrarily malicious quantum devices).
Further related work (that is independent from ours) by Coiteux-Roy and Wolf~\cite{CW19} touches on the question of provable deletion using quantum encodings.
However, their work is not concerned with encryption schemes, and therefore does not consider leaking of the key.
By contrast, we are explicitly concerned with what it would mean to delete a quantum ciphertext.
We note, however, that there are similarities between our scheme and the proposed scheme in~\cite{CW19}, namely the use of conjugate coding, with the message encoded in one basis and the conjugate basis, to prove deletion.

\paragraph{Relationship with Quantum Key Distribution.} It can be instructive to compare our results to the ones obtained in the analysis of QKD~\cite{TL17}.
Firstly, our adversarial model appears different since in certified deletion, we have one honest party (Alice, the sender) and one cheating party (Bob, the receiver), whereas QKD involves two honest parties (Alice and Bob) and one adversary (Eve).
Next, the interaction model is different since certified deletion is almost non-interactive, whereas QKD involves various rounds of interaction between Alice and Bob.
However, the procedures and proof techniques for certified deletion are close to the ones used in QKD: we use similar encodings into Wiesner states, similar privacy amplification and error correction, and the analysis via an entanglement-based game uses similar entropic uncertainty relations, leading to a security parameter that is very similar to the one in \cite{TL17}.
While we are not aware of any direct reduction from the security of a QKD scheme to certified deletion, we note that, as part of our proof technique, we manage to essentially map the adversarial model for certified deletion to one similar to the  QKD model since we \emph{split} the behaviour of our adversarial Bob into multiple phases: preparation of the joint state $\rho_{ABE}$, measurement of a register $B$ in a determined basis, and finally bounding the advantage that the adversary has in \emph{simultaneously} making Alice accept the outcome of the measurement performed on $B$ \emph{and} predicting some measurement outcome on register~$A$ given quantum side-information~$E$.
This scenario is similar to QKD, although we note that the measurement bases are not chosen randomly but are instead consistently in the Hadamard basis (for Bob's measurement) and that Eve's challenge is to predict Alice's measurement in the computational basis only (this situation is reminiscent of the \emph{single-basis parameter estimation} technique~\cite{TL17,PLWC16}).

\subsection{Applications and Open Questions}

While the main focus of this work is on the foundations of certified deletion, we can nevertheless envisage potential applications which we briefly discuss below (we leave the formal analyses for future work).

\paragraph*{Protection against data retention.} In 2016, the European Union adopted a regulation on the processing and free movement of personal data~\cite{EUR-2016-679}.
Included is a clause on the ``right to be forgotten'': a person should be able to have their data erased whenever its retention is no longer necessary.
See also~\cite{GGV20}.
Certified deletion encryption might help facilitate this scenario in the following way: if a party were to provide their data to an organization via a certified deletion encryption, the organization would be able to certify deletion of the data using the deletion circuit included in the~scheme.
Future work could develop a type of homomorphic encryption with certified deletion so that the ciphertexts could be useful to some extent while a level of security, in terms of deletion, is maintained.
Also useful would be a type of ``public verifiability'' which would enable parties other than the originator to verify deletion certificates. Contact tracing~\cite{CTV20arxiv} is another relevant scenario where individual data could be safeguarded against data retention by using certified deletion.

\paragraph*{Encryption with classical revocation.}

The concept of \emph{ciphertext revocation} allows a recipient to provably \emph{return} a ciphertext (in the sense that the sender can confirm that the ciphertext is returned and that the recipient will \emph{not} be able to decrypt, even if the key is leaked in the future); such a functionality is unachievable with classical information alone, but it is known to be achievable using quantum ciphertexts~\cite{Unr14}.
In a sense, our contribution is an extension of revocation since from the point of view of the recipient, whether quantum information is deleted or returned, the end result is similar: the recipient is unable to decrypt even given the decryption key.
Our scheme, however, has the advantage of using classical information only for the deletion.

As a use case for classical revocation, consider a situation where Bob loans Alice an amount of money. Alice agrees to pay back the full amount in time~$T$ plus 15 percent interest if Bob does not recall the loan within that time.
To implement this scheme, Alice uses a certified deletion encryption scheme to send Bob an encrypted cheque and schedules her computer to send Bob the key at time~$T$.
If Bob wishes to recall the loan within time $T$, he sends Alice the deletion string. Another possible application is \emph{timed-release encryption}~\cite{Unr14}, where the key is included in the ciphertext, but with the ciphertext encoded in a classical timed-release encryption.

\paragraph*{Composable and Everlasting Security.}

We leave as an open question the composability of our scheme (as well as security beyond the one-time case).
We note that through a combination of composability with our quantum encoding, it may be possible to transform a long-term computational assumption into a temporary one.
That is, a computational assumption would need to be broken \emph{during} a protocol, or else the security would be information-theoretically secure as soon as the protocol ends.
This is called \emph{everlasting security}~\cite{Unr13}.

For example, consider the situation encountered in a zero-knowledge proof system for a $\Sigma$-protocol (for instance, for graph 3-colouring \cite{GMW91}): the prover commits to an encoding of an $\textsf{NP}$-witness using a statistically binding and computationally concealing commitment scheme. The verifier then randomly chooses which commitments to open, and the prover provides the information required to open the commitment.
If, in addition, we could encode the commitments with a scheme that provides composable certified deletion, then the verifier could also prove that the unopened commitments are effectively \emph{deleted}.
This has the potential of ensuring that the zero-knowledge property becomes \emph{statistical} as long as the computational assumption is not broken \emph{during} the execution of the proof system.
This description assumes an extension of our certified deletion encoding to the computational setting and also somehow assumes that the verifier would collaborate in its deletion actions (we leave for future work the formal statement and analysis). Nevertheless, since zero-knowledge proofs are building blocks for a host of cryptographic protocols, certified deletion has the potential to unleash everlasting security; this is highly desirable given steady progress in both algorithms and quantum computers. Another potential application would be proving erasure (in the context where there is no encryption)~\cite{CW19}.

\subsection{Outline}

The remainder of this paper is structured as follows.
\cref{sec:prelim} is an introduction to concepts and notation used in the rest of this work.
\cref{sec:definitions} lays out the novel security definitions which appear in this paper.
\cref{sec:pm} is an exposition of our main scheme, while \cref{sec:sec} provides a security analysis.

\subsection*{Acknowledgements}
We would like to thank Carl Miller for related discussions.
We are grateful to the anonymous reviewers for pointing out a mistake in a previous version of the security definition.
This work was supported by the U.S. Air Force Office of
Scientific Research under award number FA9550-17-1-0083, Canada's NSERC, an Ontario ERA, and the University of Ottawa's Research Chairs program.

\section{Preliminaries}\label{sec:prelim}

In this section, we outline certain concepts and notational conventions which are used throughout the article.
We assume that the reader has a basic familiarity with quantum computation and quantum information.
We refer to~\cite{NC00} for further background.

\subsection{Notation}\label{ss:notation}

We make use of the following notation: for a function $f \colon X \rightarrow \bR$, we denote
\begin{equation}
    \mathop{\mathbb{E}}_x f(x) = \frac{1}{|X|} \sum_{x \in X} f(x).
\end{equation}

We represent the Hamming weight of strings as the output of a Hamming weight function $\omega \colon \{0,1\}^* \rightarrow \bN$.
If $x_1, \ldots, x_n$ are strings, then we define $(x_1, \ldots, x_n)$ to be the concatenation of these strings.
Let $[n]$ denote the set $\{ 1, 2, \ldots, n \}$.
Then, for any string $x = (x_1, \ldots, x_n)$ and any subset $\lI \subseteq [n]$, we use $x |_\lI$ to denote the string $x$ restricted to the bits indexed by $\lI$.
We call a function $\eta \colon \bN \rightarrow \bR_{\geq 0}$ \emph{negligible} if for every positive polynomial $p$, there exists an integer $N$ such that, for all integers $n > N$, it is true that $\eta(n) < \frac{1}{p(n)}$.

We let $\lQ \coloneqq \bC^2$ denote the state space of a single qubit, and we use the notation $\lQ(n) \coloneqq \lQ^{\otimes n}$ for any $n \in \bN$.
Let $\lH$ be a Hilbert space.
The group of unitary operators on $\lH$ is denoted by $\lU(\lH)$, and the set of density operators on $\lH$ is denoted by $\lD(\lH)$.
Through density operators, a Hilbert space may correspond to a \emph{quantum system}, which we represent by capital letters.
The set of diagonal density operators on $\lH$ is denoted by $\fD(\lH)$---the elements of this set represent classical states.
Discrete random variables are thus modeled as finite-dimensional quantum systems, called \emph{registers}.
A register $X$ takes values in $\lX$.
A density operator $\egoketbra{x}$ will be denoted as $\ketbra{x}$.
We employ the operator norm, which we define for a linear operator $A \colon \lH \rightarrow \lH'$ between finite-dimensional Hilbert spaces $\lH$ and $\lH'$ as
\begin{equation}
    \| A \| = \sup \{ \| Av \| \mid v \in \lH, \| v \| = 1 \}.
\end{equation}
Moreover, for two density operators $\rho, \sigma \in \lD(\lH)$, we use the notation $\rho \leq \sigma$ to say that $\sigma - \rho$ is positive semi-definite.

In order to illustrate correlations between a classical register $X$ and a quantum state $A$, we use the formalism of a \emph{classical-quantum} state:
\begin{equation}
    \rho_{XA} = \sum_{x \in \lX} P_X (x) \ketbra{x}_X \otimes \rho_{A \mid X = x},
\end{equation}
where $P_X (x) \coloneqq \Pr[X = x]_\rho = \Tr[\ketbra{x}_X \rho_{XA}]$ and $\rho_{A \mid X = x}$ is the state of $A$ conditioned on the event that $X = x$.

Let $\ketbra{x_i} \in \fD(\lH)$ be classical states for integers $i$ such that $1 \leq i \leq n$.
Then we use the notation
\begin{equation}
    \ketbra{x_1, x_2, \ldots, x_n} \coloneqq \ketbra{x_1} \otimes \ketbra{x_2} \otimes \cdots \otimes \ketbra{x_n}.
\end{equation}

Let $\sH \in \lU(\lQ)$ denote the Hadamard operator, which is defined by
\begin{equation}
    \ket{0} \mapsto \frac{\ket{0} + \ket{1}}{\sqrt{2}}, \quad \ket{1} \mapsto \frac{\ket{0} - \ket{1}}{\sqrt{2}}.
\end{equation}
For any strings $x, \theta \in \{0,1\}^n$, we define
\begin{equation}
    \ket{x^\theta} = \sH^\theta \ket{x} = \sH^{\theta_1} \ket{x_1} \otimes \sH^{\theta_2} \ket{x_2} \otimes \cdots \otimes \sH^{\theta_n} \ket{x_n}.
\end{equation}
States of the form $\ket{x^\theta}$ are here called \emph{Wiesner states} in recognition of their first use in~\cite{Wie83}.

We make use of the Einstein-Podolsky-Rosen (EPR) state~\cite{EPR35}, defined as
\begin{equation}
    \ket{\EPR} = \frac{1}{\sqrt{2}} (\ket{0} \otimes \ket{0} + \ket{1} \otimes \ket{1}).
\end{equation}

We use $x \unif X$ to denote sampling an element $x \in X$ uniformly at random from a set $X$.
This uniform randomness is represented in terms of registers in the fully mixed state which is, given a $d$-dimensional Hilbert space $\lH$, defined as
    $\frac{1}{d} 1_{d}$,
where $1_d$ denotes the identity matrix with $d$ rows.

For two quantum states $\rho, \sigma \in \lD(\lH)$, we define the \emph{trace distance}
\begin{equation}
    \lVert \rho - \sigma \rVert_{\Tr} \coloneqq \frac{1}{2} \lVert \rho - \sigma \rVert.
\end{equation}
Note also an alternative formula for the trace distance:
\begin{equation}
    \lVert \rho - \sigma \rVert_{\Tr} = \max_P \Tr[P(\rho - \sigma)],
\end{equation}
where $P \leq 1_d$ is a positive operator. Hence, in terms of a physical interpretation, the trace distance is the upper bound for the difference in probabilities with respect to the states $\rho$ and $\sigma$ that a measurement outcome $P$ may occur on the state.

We define purified distance, which is a metric on quantum states.
\begin{definition}[Purified Distance]
    Let $A$ be a quantum system.
    For two (subnormalized) states $\rho_A, \sigma_A$, we define the \emph{generalized fidelity},
    \begin{equation}
        F(\rho_A, \sigma_A) \coloneqq \left( \Tr \left[ \sqrt{\sqrt{\rho_A} \sigma_A \sqrt{\rho_A}} \right] + \sqrt{1 - \Tr[\rho_A]} \sqrt{1 - \Tr[\sigma_A]} \right)^2,
    \end{equation}
    and the \emph{purified distance},
    \begin{equation}
        P(\rho_A, \sigma_A) \coloneqq \sqrt{1 - F(\rho_A, \sigma_A)}.
    \end{equation}
\end{definition}

\subsection{Hash Functions and Error Correction}

We make use of universal$_2$ hash functions, first introduced by Carter and Wegman~\cite{CW79}.
\begin{definition}[Universal$_2$ Hashing]
    Let $\fH = \{H \colon \lX \rightarrow \lZ\}$ be a family of functions.
    We say that $\fH$ is \emph{universal}$_2$ if $\Pr[H(x) = H(x')] \leq \frac{1}{| \lZ |}$ for any two distinct elements $x, x' \in \lX$, when $H$ is chosen uniformly at random from $\fH$.
\end{definition}
Such families exist if $|\lZ|$ is a power of two (see~\cite{CW79}).
Moreover, there exist universal$_2$ families of hash functions which take strings of length $n$ as input and which contain $2^{O(n)}$ hash functions; therefore it takes $O(n)$ bits to specify a hash function from such a family~\cite{WC81}.
Thus, when we discuss communication of hash functions, we assume that both the sender and the recipient are aware of the family from which a hash function has been chosen, and that the transmitted data consists of $O(n)$ bits used to specify the hash function from the known family.

In the context of error correction, we note that linear error correcting codes can generate syndromes, and that corrections to a message can be made when given the syndrome of the correct message. This is called syndrome decoding.
Therefore, we implicitly refer to syndrome decoding of an $[n, n-s]$-linear code which handles codewords of length $n$ and generates syndromes of length $s < n$ when we use functions $\synd \colon \{0,1\}^n \rightarrow \{0,1\}^s$ and $\corr \colon \{0,1\}^n \times \{0,1\}^s \rightarrow \{0,1\}^n$, where $\synd$ is a syndrome-generating function and $\corr$ is a string-correcting function.
We also make reference to the distance of an error correcting code, which is the minimum distance between distinct codewords.

\subsection{Quantum Channels and Measurements}\label{ss:prelim-msmt}

Let $A$ and $B$ be two quantum systems, and let $X$ be a classical register.
A \emph{quantum channel} $\Phi \colon A \rightarrow B$ is a completely positive trace-preserving (CPTP) map.
A \emph{generalized measurement} on $A$ is a set of linear operators $\{M^x _A\}_{x \in \lX}$, where $x \in \lX$ are potential classical outcomes, such that
\begin{equation}
    \sum_{x \in \lX} (M_A ^x )^\dagger (M_A ^x) = 1_A.
\end{equation}
A \emph{positive-operator valued measure} (POVM) on $A$ is a set of Hermitian positive semidefinite operators $\{M^x _A\}_{x \in \lX}$, where $x \in \lX$ are potential classical outcomes, such that
\begin{equation}
    \sum_{x \in \lX} (M_A ^x) = 1_A.
\end{equation}
We also represent measurements with CPTP maps such as $\lM_{A \rightarrow X}$, which map quantum states in system $A$ to classical states in register $X$ using POVMs.

For two registers $X$ and $Y$, if we have a function, $f \colon \lX \rightarrow \lY$ then we denote by $\lE_f \colon X \rightarrow XY$ the CPTP map
\begin{equation}
    \lE_f [\cdot] \coloneqq \sum_{x \in X} \ket{f(x)}_Y \ketbra{x}_X \cdot \ketbra{x}_X \bra{f(x)}_Y.
\end{equation}

In this work, measurement of a qubit in our scheme will always occur in one of two bases: the computational basis ($\{ \ket{0}, \ket{1} \}$) or the Hadamard basis $(\{ \ket{+}, \ket{-} \})$.
Thus, for a quantum system $A$, we notate these measurements as $ \{ M_{A} ^{\theta, x} \}_{x \in \{0,1\}}$, where $x \in \{0,1\}$ ranges over the possible outcomes, and where $\theta \in \{0,1\}$ determines the basis of measurement ($\theta = 0$ indicates computational basis and $\theta = 1$ indicates Hadamard basis).

Let $\{M^x _A\}_x$ and $\{N^y _A\}_y$ be two POVMs acting on a quantum system $A$.
We define the overlap
\begin{equation}
    c(\{M^x _A\}_x, \{N^y _B\}_y) \coloneqq \max_{x,y} \left \lVert \sqrt{M^x _A} \sqrt{N^y _A} \right \rVert ^2 _\infty .
\end{equation}
Wherever dealing with an $m$-qubit quantum system $A$, we define, for all $i = 1, \ldots, m$,
\begin{equation}
    c_i \coloneqq c \left( \{M^{0,x} _{A_i}\}_x , \{M^{1,y} _{A_i}\}_y \right).
\end{equation}
We assume our measurements are ideal, so $c_i = 1/2$.

\subsection{Entropic Uncertainty Relations}\label{sec:entropic-uncertainty-rels}

The purpose of entropy is to quantify the amount of uncertainty an observer has concerning the outcome of a random variable.
Since the uncertainty of random variables can be understood in different ways, there exist different kinds of entropy.
Key to our work are min- and max-entropy, first introduced by Renner and K{\"o}nig~\cite{Ren05, KRS09}, as a generalization of conditional R{\'e}nyi entropies~\cite{Ren61} to the quantum setting.
Min-entropy, for instance, quantifies the degree of uniformity of the distribution of a random variable.
\begin{definition}[Min-entropy]
    Let $A$ and $B$ be two quantum systems.
    For any bipartite state $\rho_{AB}$, we define
    \begin{equation}
        H_{\min} (A \mid B)_\rho \coloneqq \sup \{ \xi \in \bR \mid \exists~\textrm{state}~\sigma_B~\textrm{such that}~\rho_{AB} \leq 2^{-\xi} 1_A \otimes \sigma_B \}.
    \end{equation}
\end{definition}

Max-entropy quantifies the size of the support of a random variable, and is here defined by its dual relation to min-entropy.
\begin{definition}[Max-entropy]
    Let $A$ and $B$ be two quantum systems.
    For any bipartite state $\rho_{AB}$, we define
    \begin{equation}
        H_{\max} (A \mid B)_\rho \coloneqq - H_{\min} (A \mid C)_\rho,
    \end{equation}
    where $\rho_{ABC}$ is any pure state with $\Tr_C [\rho_{ABC}] = \rho_{AB}$, for some quantum system~$C$.
\end{definition}

In order to deal with finite-size effects, it is necessary to generalize min- and max-entropy to their smooth variants.
\begin{definition}[Smooth Entropies] \label{def:smoothentropies}
    Let $A$ and $B$ be two quantum systems.
    For any bipartite state $\rho_{AB}$, and $\epsilon \in \left[ 0, \sqrt{\Tr[\rho_{AB}]} \right)$, we define
    \begin{align}
        H_{\min}^\epsilon (A \mid B)_\rho \coloneqq \sup_{\substack{\tilde \rho_{AB} \\ P(\tilde \rho_{AB}, \rho_{AB}) \leq \epsilon}} H_{\min} (A \mid B)_{\tilde \rho},\\
        H_{\max}^\epsilon (A \mid B)_\rho \coloneqq \inf_{\substack{\tilde \rho_{AB} \\ P(\tilde \rho_{AB}, \rho_{AB}) \leq \epsilon}} H_{\max} (A \mid B)_{\tilde \rho}.
    \end{align}
\end{definition}

It is of note that smooth entropies satisfy the following inequality, commonly referred to as the data-processing inequality~\cite{TCR10}.
\begin{proposition}
    \label{prop:data}
    Let $\epsilon \geq 0$, $\rho_{AB}$ be a quantum state, and $\lE \colon \lD(\lH_A) \rightarrow \lD(\lH_C)$ be a CPTP map.
Define $\sigma_{AC} \coloneqq ( 1_{\lD(\lH_A)} \otimes \lE ) (\rho_{AB})$.
Then,
    \begin{equation}
        H_{\min} ^\epsilon (A \mid B)_\rho  \leq H_{\min} ^\epsilon (A \mid C)_\sigma \quad \textrm{and} \quad H_{\max} ^\epsilon (A \mid B )_\rho \leq H_{\max} ^\epsilon (A \mid C)_\sigma.
    \end{equation}
\end{proposition}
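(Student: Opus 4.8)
To prove the two inequalities, the plan is to settle the min-entropy case first from the bare definition and then bootstrap everything else from it. Suppose $\xi\in\bR$ and a state $\tau_B$ witness the supremum defining $H_{\min}(A\mid B)_\rho$, i.e.\ $\rho_{AB}\leq 2^{-\xi}\,1_A\otimes\tau_B$. Since $\lE$ is completely positive, the map $1_{\lD(\lH_A)}\otimes\lE$ is positive and therefore preserves the semidefinite order; applying it to both sides, and using that $\lE$ is trace-preserving so that $\lE(\tau_B)$ is again a state, gives $\sigma_{AC}\leq 2^{-\xi}\,1_A\otimes\lE(\tau_B)$. Hence every $\xi$ feasible for $H_{\min}(A\mid B)_\rho$ is feasible for $H_{\min}(A\mid C)_\sigma$, which yields the non-smooth bound $H_{\min}(A\mid B)_\rho\leq H_{\min}(A\mid C)_\sigma$.

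For the smooth min-entropy bound, let $\tilde\rho_{AB}$ attain the supremum in $H_{\min}^\epsilon(A\mid B)_\rho$, so $P(\tilde\rho_{AB},\rho_{AB})\leq\epsilon$, and set $\tilde\sigma_{AC}\coloneqq(1_{\lD(\lH_A)}\otimes\lE)(\tilde\rho_{AB})$. The purified distance is monotone non-increasing under CPTP maps, so $P(\tilde\sigma_{AC},\sigma_{AC})\leq\epsilon$, i.e.\ $\tilde\sigma_{AC}$ is feasible for $H_{\min}^\epsilon(A\mid C)_\sigma$. Combining this with the non-smooth bound applied to $\tilde\rho_{AB}$ gives $H_{\min}^\epsilon(A\mid B)_\rho=H_{\min}(A\mid B)_{\tilde\rho}\leq H_{\min}(A\mid C)_{\tilde\sigma}\leq H_{\min}^\epsilon(A\mid C)_\sigma$ (if the supremum is not attained, run the same argument along an approximating sequence).

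For the max-entropy bounds I would pass through a purification and the min--max duality. Fix a Stinespring dilation $V\colon\lH_B\to\lH_C\otimes\lH_{C'}$ of $\lE$, an isometry with $\lE(\cdot)=\Tr_{C'}[V(\cdot)V^\dagger]$, and let $\rho_{ABR}$ be a purification of $\rho_{AB}$. Then $\sigma_{ACC'R}\coloneqq(1_{AR}\otimes V)\,\rho_{ABR}\,(1_{AR}\otimes V)^\dagger$ is pure with $\Tr_{C'R}[\sigma_{ACC'R}]=\sigma_{AC}$, and a short computation (using $V^\dagger V=1_B$) shows $\Tr_{C'}[\sigma_{AC'R}]=\rho_{AR}$; that is, $\rho_{AR}$ is obtained from $\sigma_{AC'R}$ by the CPTP map tracing out $C'$. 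Applying the smooth min-entropy data-processing inequality just established (to the conditioning register) gives $H_{\min}^\epsilon(A\mid C'R)_\sigma\leq H_{\min}^\epsilon(A\mid R)_\rho$. Finally, the duality relation for smooth entropies, evaluated on the purifications $\sigma_{ACC'R}$ and $\rho_{ABR}$, reads $H_{\max}^\epsilon(A\mid C)_\sigma=-H_{\min}^\epsilon(A\mid C'R)_\sigma$ and $H_{\max}^\epsilon(A\mid B)_\rho=-H_{\min}^\epsilon(A\mid R)_\rho$; substituting and negating yields $H_{\max}^\epsilon(A\mid B)_\rho\leq H_{\max}^\epsilon(A\mid C)_\sigma$.

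I expect the main obstacle to lie in the max-entropy case: the min-entropy bound is essentially immediate once one invokes that positive maps preserve the operator order and that purified distance contracts under CPTP maps, whereas for max-entropy one must correctly transport the inequality through the dual picture---in particular identifying $C'R$ rather than $R$ as the conditioning register of the dual state, and making sure one is using the \emph{smooth} duality relation and not merely the non-smooth one appearing in the definition. An alternative that sidesteps duality, if preferred, is to use the fidelity characterization of max-entropy together with monotonicity of $F$ under CPTP maps and then smooth exactly as above; this route requires first establishing that characterization, which is not recorded in the preliminaries.
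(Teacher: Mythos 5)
The paper does not prove this proposition at all: it is imported verbatim from the literature (cited to Tomamichel, Colbeck and Renner), so there is no in-paper argument to compare against. Judged on its own merits, your proof is correct and is essentially the standard argument. The non-smooth min-entropy step is right: complete positivity of $\lE$ is exactly what makes $1\otimes\lE$ positive, so the operator inequality $\rho_{AB}\leq 2^{-\xi}1_A\otimes\tau_B$ is preserved, and trace preservation keeps $\lE(\tau_B)$ a valid conditioning state. The smoothing step correctly uses contractivity of the purified distance under CPTP maps, and you rightly note the approximating-sequence caveat if the supremum is not attained. The max-entropy half via Stinespring dilation and duality is also the standard route, and your identification of $C'R$ (not $R$) as the dual conditioning register of $\sigma$ is the step people usually get wrong; you got it right, and the trace computations $\Tr_{C'R}[\sigma_{ACC'R}]=\sigma_{AC}$ and $\Tr_{CC'}[\sigma_{ACC'R}]=\rho_{AR}$ both check out using $V^\dagger V=1_B$.

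The one caveat worth recording: your max-entropy argument leans on the duality relation for \emph{smooth} entropies, $H_{\max}^\epsilon(A\mid B)_\rho=-H_{\min}^\epsilon(A\mid C)_\rho$ over a purification. In the paper's framework only the non-smooth max-entropy is defined by duality; the smooth max-entropy is an infimum over nearby states, and the statement that this infimum is again dual to a supremum over nearby states of the min-entropy is itself a theorem of roughly the same depth as the data-processing inequality (it is the main result of the very reference the paper cites). You flag this yourself, which is the right instinct, but be aware that as written the max-entropy half is a reduction to another unproven result rather than a self-contained proof. Your suggested alternative via the fidelity characterization of $H_{\max}$ has the same character. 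Finally, you silently corrected the typo in the statement ($\lE$ should act on $\lD(\lH_B)$, not $\lD(\lH_A)$); that correction is the intended reading.
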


We use one half of the generalized uncertainty relation theorem found in~\cite{Tom12}, the precursor of which was introduced by Tomamichel and Renner~\cite{TR11}.
The original uncertainty relation was understood in terms of its application to QKD, and was used to prove the secrecy of the key in a finite-key analysis of QKD~\cite{TLGR12}.
\begin{proposition}
    \label{prop:unc}
    Let $\epsilon \geq 0$, let $\rho_{ACE}$ be a tripartite quantum state and let $\{M^x _A \}_{x \in \lX}$ and $\{N^z _A \}_{z \in \lZ}$ be two POVMs acting on $A$, and let $\{P^k _A\}_{k \in \lK}$ be a projective measurement acting on $A$.
    Then the post-measurement states
    \begin{equation}
        \rho_{XKC} = \sum_{x, k} \braket{x} \otimes \braket{k} \otimes \Tr_{AE} \left[ \sqrt{M^x _A} P^k _A \rho_{ACE} P^k _A \sqrt{M^x _A} \right]
    \end{equation}
    and
    \begin{equation}
    \rho_{YKE} = \sum_{y, k} \braket{y} \otimes \braket{k} \otimes \Tr_{AC} \left[ \sqrt{N^y _A} P^k _A \rho_{ACE} P^k _A \sqrt{N^y _A} \right]
    \end{equation}
    satisfy
    \begin{equation}
        H^\epsilon _{\min} (X \mid KC)_\rho + H^\epsilon _{\max} (Y \mid KE)_\rho \geq \log \frac{1}{c_\lK}
    \end{equation}
    where $c_\lK = \max_{k,x,y} \left \lVert \sqrt{M^x _A} P^k \sqrt{N^y _A} \right \rVert _\infty$.
\end{proposition}

We also use the Leftover Hashing Lemma, introduced by Renner~\cite{Ren05}.
It is typically understood in relation to the privacy amplification step of QKD.
We state it in the form given in~\cite{TL17}.

\begin{proposition}
\label{prop:lhl}
    Let $\epsilon \geq 0$ and $\sigma_{AX}$ be a classical-quantum state, with $X$ a classical register which takes values on $\lX = \{0,1\}^s$.
    Let $\fH$ be a universal$_2$ family of hash functions from $\lX$ to $\lY = \{0,1\}^n$.
    Let $\chi_Y = \frac{1}{2^n} 1_{\lD(\lY)}$ be the fully mixed state, $\rho_{S^H} = \frac{1}{| \fH |} \sum_{H \in \fH} \ketbra{H}_{S^H}$ and $\zeta_{AYS^H} = \Tr_X[\lE_f (\sigma_{AX} \otimes \rho_{S^H})]$ for the function $f \colon (x,H) \mapsto H(x)$ be the post-hashing state.
    Then,
    \begin{equation}
        \| \zeta_{A Y S^H} - \chi_{Y} \otimes \zeta_{A S^H} \|_{\Tr} \leq \frac{1}{2}2^{- \frac{1}{2} (H_{\min} ^\epsilon (X \mid A)_\sigma - n)} + 2 \epsilon.
    \end{equation}
\end{proposition}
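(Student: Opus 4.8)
The plan is to prove \cref{prop:lhl} in the customary two stages: first reduce the smooth statement to a non-smooth one, at the cost of the additive $2\epsilon$, and then establish the non-smooth Leftover Hashing Lemma through a second-moment (collision) estimate combined with the universal$_2$ property and the defining inequality of $H_{\min}$.

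\emph{The reduction.} I would pick a subnormalized state $\tilde\sigma_{XA}$ with $P(\tilde\sigma_{XA},\sigma_{XA})\leq\epsilon$ attaining $H_{\min}(X\mid A)_{\tilde\sigma}=H_{\min}^\epsilon(X\mid A)_\sigma$ (the supremum in \cref{def:smoothentropies} is attained, by compactness and upper semicontinuity of $H_{\min}$); abbreviate this value by $h$. One point that needs care: the optimizer need not be classical on $X$, so I would first apply the channel that dephases $X$ in the computational basis, which leaves $\sigma_{XA}$ untouched, cannot increase the purified distance, and by \cref{prop:data} cannot decrease $H_{\min}(X\mid A)$, so that $\tilde\sigma_{XA}$ may be taken classical on $X$. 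Let $\tilde\zeta_{AYS^H}$ be built from $\tilde\sigma_{XA}$ by the same recipe that produces $\zeta_{AYS^H}$ from $\sigma_{XA}$ (apply $\lE_f$ with $f\colon(x,H)\mapsto H(x)$ to $\tilde\sigma_{AX}\otimes\rho_{S^H}$ and trace out $X$). Since this is one fixed CPTP map applied to $\tilde\sigma_{AX}\otimes\rho_{S^H}$ versus $\sigma_{AX}\otimes\rho_{S^H}$, monotonicity of the trace distance together with $\|\cdot\|_{\Tr}\leq P(\cdot,\cdot)$ gives $\|\zeta_{AYS^H}-\tilde\zeta_{AYS^H}\|_{\Tr}\leq\epsilon$ and $\|\zeta_{AS^H}-\tilde\zeta_{AS^H}\|_{\Tr}\leq\epsilon$; a triangle inequality then reduces the goal to the non-smooth bound $\|\tilde\zeta_{AYS^H}-\chi_Y\otimes\tilde\zeta_{AS^H}\|_{\Tr}\leq\frac12\,2^{-\frac12(h-n)}$.

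\emph{The non-smooth bound.} Fix a state $\sigma_A$ witnessing $h$, i.e.\ $\tilde\sigma_{XA}\leq 2^{-h}\,1_X\otimes\sigma_A$ (restrict everything to $\mathrm{supp}(\sigma_A)$, which contains the support of every conditional $\tilde\sigma_{A,x}$, so that $\sigma_A^{-1/4}$ is meaningful). I would argue in three moves. First, since the registers $\ketbra{H}_{S^H}$ are orthogonal, $\|\tilde\zeta_{AYS^H}-\chi_Y\otimes\tilde\zeta_{AS^H}\|_{\Tr}=\mathbb{E}_H\|\tilde\zeta_{AY}^{(H)}-\chi_Y\otimes\tilde\zeta_{A}^{(H)}\|_{\Tr}$, with $\tilde\zeta_{AY}^{(H)}=\sum_x\ketbra{H(x)}_Y\otimes\tilde\sigma_{A,x}$. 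Second, for any Hermitian $M$ on the bipartite system $YA$, the generalized H{\"o}lder inequality with exponents $(4,2,4)$ applied to the factorization $M=(1_Y\otimes\sigma_A^{1/4})\,M'\,(1_Y\otimes\sigma_A^{1/4})$, where $M'\coloneqq(1_Y\otimes\sigma_A^{-1/4})M(1_Y\otimes\sigma_A^{-1/4})$, yields $\|M\|_{\Tr}\leq\frac12\sqrt{|\lY|}\,\sqrt{\Tr[(M')^2]}$, with $|\lY|=2^n$ the dimension of $Y$; I apply this to $M=\tilde\zeta_{AY}^{(H)}-\chi_Y\otimes\tilde\zeta_{A}^{(H)}$ and use Jensen to pull $\mathbb{E}_H$ inside the square root. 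Third, it remains to show $\mathbb{E}_H\Tr[(M_H')^2]\leq 2^{-h}$: writing $\bar\sigma_{A,x}\coloneqq\sigma_A^{-1/4}\tilde\sigma_{A,x}\sigma_A^{-1/4}$ and $\bar\sigma_A\coloneqq\sum_x\bar\sigma_{A,x}$, expanding the square gives $\Tr[(M_H')^2]=\sum_{x,x'\colon H(x)=H(x')}\Tr[\bar\sigma_{A,x}\bar\sigma_{A,x'}]-2^{-n}\Tr[\bar\sigma_A^2]$ (the $\chi_Y$ terms contributing $-2^{-n}\Tr[\bar\sigma_A^2]$ net), and then $\Pr_H[H(x)=H(x')]\leq 2^{-n}$ for $x\neq x'$, together with positivity of each $\Tr[\bar\sigma_{A,x}\bar\sigma_{A,x'}]$, makes all the $\Tr[\bar\sigma_A^2]$ contributions cancel, leaving $\mathbb{E}_H\Tr[(M_H')^2]\leq\sum_x\Tr[\bar\sigma_{A,x}^2]=\Tr[(\sigma_A^{-1/4}\tilde\sigma_{XA}\sigma_A^{-1/4})^2]$. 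Finally, $\sigma_A^{-1/4}\tilde\sigma_{XA}\sigma_A^{-1/4}$ is positive and, by $\tilde\sigma_{XA}\leq 2^{-h}1_X\otimes\sigma_A$, dominated by $2^{-h}\,1_X\otimes\sigma_A^{1/2}$, so this quantity is at most $2^{-h}\Tr[\tilde\sigma_{XA}]\leq 2^{-h}$. Combining the three moves gives $\frac12\sqrt{2^n}\sqrt{2^{-h}}=\frac12\,2^{-\frac12(h-n)}$, and adding the $2\epsilon$ from the reduction completes the proof.

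I expect the main obstacle to be not the collision bookkeeping of the third move, which is mechanical, but the two places where the architecture has to be exactly right: in the reduction, justifying that the smooth optimizer may be taken classical on $X$ (which is precisely where \cref{prop:data} and monotonicity of purified distance are needed), and, in the second move, tracking the dimension factor $\sqrt{|\lY|}=\sqrt{2^n}$ correctly, since this factor is exactly what converts the collision quantity $2^{-h}$ into the claimed exponent $-\frac12(H_{\min}^\epsilon(X\mid A)_\sigma-n)$; the conjugation by $\sigma_A^{-1/4}$ is essential here so that no (potentially enormous) factor of $\dim A$ ever enters. A residual technicality, disposed of by passing to $\mathrm{supp}(\sigma_A)$, is that $\sigma_A$ need not be invertible.
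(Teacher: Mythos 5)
The paper does not prove \cref{prop:lhl} at all: it imports the Leftover Hashing Lemma as a black box from Renner's thesis, restated in the form of Tomamichel--Leverrier, so there is no internal proof to compare against. Your reconstruction is, as far as I can check, a correct and complete proof, and it is essentially the standard one from that literature: smooth-to-non-smooth reduction costing $2\epsilon$ via monotonicity of trace/purified distance under the fixed hashing channel, then the second-moment argument in which H{\"o}lder with exponents $(4,2,4)$ after conjugating by $1_Y\otimes\sigma_A^{-1/4}$ produces the $\sqrt{|\lY|}$ factor, the universal$_2$ collision probability kills the cross terms, and the operator inequality $\tilde\sigma_{XA}\leq 2^{-h}\,1_X\otimes\sigma_A$ bounds the surviving diagonal term by $2^{-h}\Tr[\tilde\sigma_{XA}]\leq 2^{-h}$. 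The bookkeeping in all three moves checks out, including the net $-2^{-n}\Tr[\bar\sigma_A^2]$ from the $\chi_Y$ terms and the cancellation after applying $\Pr_H[H(x)=H(x')]\leq 2^{-n}$. One small quibble: to argue that the smoothing optimizer may be taken classical on $X$, you invoke \cref{prop:data}, but as stated that proposition concerns a channel applied to the \emph{conditioning} system, whereas your dephasing acts on $X$ itself; the needed monotonicity is nonetheless immediate from the definition of $H_{\min}$, since pinching $X$ in the computational basis fixes $1_X\otimes\sigma_A$ and hence preserves the defining operator inequality. With that justification substituted, the argument is sound.
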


\subsection{Statistical Lemmas}

The following lemmas are required to bound a specific max-entropy quantity.
They are both proven in~\cite{TL17} as part of a security proof of finite-key QKD, and this line of thinking originated in~\cite{TLGR12}.

The following lemma is a consequence of Serfling's bound~\cite{Ser74}.

\begin{lemma}
\label{lem:stat1}
    Let $Z_1, \ldots Z_m$ be random variables taking values in $\{0,1\}$.
    Let $m = s + k$.
    Let $\lI$ be an independent and uniformly chosen subset of $[m]$ with $s$ elements.
    Then, for $\nu \in [0,1]$ and $\delta \in (0,1),$
    \begin{equation}
        \Pr \left[\sum_{i \in \lI} Z_i \leq k \delta \wedge \sum_{i \in \bar \lI} Z_i \geq s (\delta + \nu) \right] \leq \exp \left( \frac{-2 \nu^2 s k^2}{m (k+1)} \right).
    \end{equation}
\end{lemma}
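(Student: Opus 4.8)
The plan is to reduce the stated inequality to a direct application of Serfling's bound for sampling without replacement from a finite population of bits. First I would set up the population: fix the realizations $z_1, \ldots, z_m \in \{0,1\}$ of the random variables (by averaging, it suffices to prove the bound conditioned on any fixed realization, since the subset $\lI$ is chosen independently of the $Z_i$'s; the final bound does not depend on the realization). Write $a = \sum_{i=1}^m z_i$ for the total number of ones and $\mu = a/m$ for the population mean. The random quantity $\sum_{i \in \lI} z_i$ is then the sum of a size-$s$ sample drawn uniformly without replacement, with expectation $s\mu$, and $\sum_{i \in \bar\lI} z_i = a - \sum_{i \in \lI} z_i$ is the complementary sum over the remaining $k = m - s$ elements.

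Next I would translate the event $\{\sum_{i \in \lI} z_i \leq k\delta\} \wedge \{\sum_{i \in \bar\lI} z_i \geq s(\delta + \nu)\}$ into a deviation statement about the sample mean versus the population mean. The key algebraic step: if both conjuncts hold, then $a = \sum_{i\in\lI} z_i + \sum_{i \in \bar\lI} z_i$ is squeezed, and one can show that the sample average $\frac{1}{s}\sum_{i\in\lI} z_i$ falls below $\mu$ by at least a definite margin proportional to $\nu$. Concretely, from $\sum_{i \in \bar\lI} z_i \geq s(\delta+\nu)$ we get $a \geq \sum_{i \in \lI} z_i + s(\delta+\nu)$, while $\sum_{i\in\lI} z_i \leq k\delta$ gives an upper bound on that term; combining yields a lower bound on $\mu - \frac{1}{s}\sum_{i \in \lI} z_i$ of the form $\mathrm{const}\cdot\nu$, where tracking the constant gives the $\nu^2 s k^2 / (m(k+1))$ exponent. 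Then Serfling's inequality (the one-sided tail bound for the sample mean of a without-replacement sample) gives $\Pr[\frac{1}{s}\sum_{i\in\lI} z_i \leq \mu - t] \leq \exp(-2 t^2 s / (1 - (s-1)/m))$ or the analogous form with the $(k+1)$ denominator, and substituting the margin for $t$ closes the proof.

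The main obstacle, and the step I would spend the most care on, is getting the constants in the exponent exactly right: Serfling's bound comes in several slightly different forms (depending on whether one writes the finite-population correction as $1 - f_n$ with $f_n = (n-1)/N$ or uses the cleaner $n/N$ version), and one must choose the version whose correction factor, after the algebraic substitution relating the two conjuncts, produces precisely $\frac{m(k+1)}{sk^2}$ in the denominator of the $-2\nu^2$ term. A related subtlety is that the event involves \emph{both} the in-sample sum and the out-of-sample sum, so there is a small optimization in how one apportions the total deviation; I would verify that the naive apportionment already yields the claimed bound rather than a weaker one. Since the lemma is quoted from~\cite{TL17,TLGR12}, I would cross-check the final constant against those references. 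Everything else — the conditioning argument, the reduction to a sample mean, the monotone relabeling of events — is routine.
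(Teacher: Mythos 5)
The paper itself offers no proof of this lemma---it is quoted as a consequence of Serfling's bound and attributed to~\cite{TL17}---so your task was really to reconstruct that derivation, and your overall strategy (fix the realization, view one of the two sums as a without-replacement sample, convert the conjunction into a one-sided deviation of a sample mean from the population mean $\mu$, then invoke Serfling with the finite-population correction) is the right one. However, the ``key algebraic step'' you defer is exactly where the argument breaks, and it breaks because the lemma as printed is false: the set sizes are mismatched with the thresholds. With $|\lI|=s$ and $|\bar\lI|=k$ as stated, take $m=3$, $s=1$, $k=2$, all $Z_i\equiv 1$, $\delta=\nu=\tfrac12$: both conjuncts hold with probability $1$ (indeed $1\le k\delta=1$ and $2\ge s(\delta+\nu)=1$), yet the right-hand side is $\exp(-2/9)<1$. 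Correspondingly, your claimed margin does not materialize: writing $\mu=\frac{s}{m}\cdot\frac{1}{s}\sum_{i\in\lI}Z_i+\frac{k}{m}\cdot\frac{1}{k}\sum_{i\in\bar\lI}Z_i$, the two conjuncts give $\mu-\frac{1}{s}\sum_{i\in\lI}Z_i\ \geq\ \frac{k}{m}\bigl(\frac{s(\delta+\nu)}{k}-\frac{k\delta}{s}\bigr)=\frac{(s^2-k^2)\delta+s^2\nu}{sm}$, which is not of the form $\mathrm{const}\cdot\nu$ and is negative whenever $k>s$ and $\nu$ is small relative to $\delta$---so there is no deviation event to hand to Serfling.

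The repair is to notice the typo: the intended statement (the one actually invoked in \cref{prop:maxent}, where the $k\delta$ threshold is applied to the $k$-element check registers $V,W$) pairs the size-$k$ set with the threshold $k\delta$ and the size-$s$ set with the threshold $s(\delta+\nu)$. Under that pairing the conjunction yields, for the size-$s$ sample, the clean upward deviation $\frac{1}{s}\sum Z_i-\mu\ \ge\ \frac{k}{m}\nu$: the $\delta$'s cancel exactly because $\mu$ is the convex combination of the two sample means with weights $k/m$ and $s/m$, so no ``apportionment'' or optimization is needed. Serfling's one-sided bound for a size-$s$ sample, $\exp\bigl(-2st^2/(1-\tfrac{s-1}{m})\bigr)=\exp\bigl(-\tfrac{2st^2m}{k+1}\bigr)$ with $t=k\nu/m$, then gives precisely $\exp\bigl(-\tfrac{2\nu^2sk^2}{m(k+1)}\bigr)$. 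So your plan closes once you (i) flag and correct the mismatched set sizes in the statement and (ii) replace the tentative margin computation with this exact convex-combination identity; note also that applying Serfling to the size-$k$ sample instead would give the different (and wrong) exponent $\tfrac{2\nu^2 k s^2}{m(s+1)}$, so the choice of which set plays the role of the sample is forced.
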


It will also be useful to condition a quantum state on future events.
The following lemma from~\cite{TL17} states that, given a classical-quantum state, there may exist a nearby state on which a certain event does not occur.

\begin{lemma}
\label{lem:stat2}
    Let $\rho_{AX}$ be a classical-quantum state with $X$ a classical register, and $\Omega \colon \lX \rightarrow \{0,1\}$ be an event with $\Pr[\Omega]_\rho = \epsilon < \Tr[\rho_{AX}]$.
    Then there exists a classical-quantum state $\tilde{\rho}_{AX}$ with $\Pr[\Omega]_{\tilde \rho} = 0$ and $P(\rho_{AX}, \tilde {\rho}_{AX}) \leq \sqrt{\epsilon}$.
\end{lemma}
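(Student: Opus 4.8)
The plan is to exhibit $\tilde\rho_{AX}$ explicitly as the subnormalized \emph{conditional} state obtained from $\rho_{AX}$ by throwing away — and then renormalizing away — the part on which $\Omega$ occurs, and then to bound the purified distance by a direct, block‑diagonal computation of the generalized fidelity.

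Concretely, I would first write the classical--quantum state as $\rho_{AX} = \sum_{x\in\lX}\rho_{A\mid x}\otimes\ketbra{x}_X$ with $\rho_{A\mid x}\ge 0$ on $\lH_A$ (so $\rho_{A\mid x}$ is the subnormalized conditional operator, $\Tr[\rho_{A\mid x}]=\Pr[X=x]_\rho$), and set $t\coloneqq\Tr[\rho_{AX}]\le 1$, so that $\sum_{x:\Omega(x)=1}\Tr[\rho_{A\mid x}]=\epsilon$ and $\sum_{x:\Omega(x)=0}\Tr[\rho_{A\mid x}]=t-\epsilon$; note the hypothesis $\epsilon<t\le 1$ forces $1-\epsilon>0$. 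Then I define
\[
  \tilde\rho_{AX} \coloneqq \frac{1}{1-\epsilon}\sum_{x:\Omega(x)=0}\rho_{A\mid x}\otimes\ketbra{x}_X .
\]
This is positive semidefinite with $\Tr[\tilde\rho_{AX}]=(t-\epsilon)/(1-\epsilon)\le 1$, hence a legitimate subnormalized classical--quantum state, and it assigns no weight to any $x$ with $\Omega(x)=1$, so $\Pr[\Omega]_{\tilde\rho}=0$ as required.

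It then remains to check $P(\rho_{AX},\tilde\rho_{AX})\le\sqrt\epsilon$, i.e.\ $F(\rho_{AX},\tilde\rho_{AX})\ge 1-\epsilon$. Both states are block diagonal in $\{\ket{x}_X\}$, so the fidelity term factors over blocks; on each surviving block $\tilde\rho_{A\mid x}=\frac{1}{1-\epsilon}\rho_{A\mid x}$ commutes with $\rho_{A\mid x}$, giving $\sqrt{\sqrt{\rho_{A\mid x}}\,\tilde\rho_{A\mid x}\,\sqrt{\rho_{A\mid x}}}=\frac{1}{\sqrt{1-\epsilon}}\rho_{A\mid x}$, and therefore $\Tr[\sqrt{\sqrt{\rho_{AX}}\,\tilde\rho_{AX}\,\sqrt{\rho_{AX}}}]=(t-\epsilon)/\sqrt{1-\epsilon}$. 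Since also $\sqrt{1-\Tr[\rho_{AX}]}\,\sqrt{1-\Tr[\tilde\rho_{AX}]}=\sqrt{1-t}\cdot\sqrt{(1-t)/(1-\epsilon)}=(1-t)/\sqrt{1-\epsilon}$, the generalized fidelity collapses to
\[
  F(\rho_{AX},\tilde\rho_{AX}) = \left(\frac{(t-\epsilon)+(1-t)}{\sqrt{1-\epsilon}}\right)^{2} = 1-\epsilon ,
\]
so in fact $P(\rho_{AX},\tilde\rho_{AX})=\sqrt\epsilon$, which is slightly stronger than the claimed bound.

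I do not expect a genuine obstacle here; the one thing to get right is the \emph{choice} of $\tilde\rho$. The naive candidate — projecting $\rho_{AX}$ onto the $\Omega=0$ subspace \emph{without} the $1/(1-\epsilon)$ rescaling — gives $F=(1-\epsilon)^2$ and hence only $P=\sqrt{2\epsilon-\epsilon^2}$, which is too weak; restoring the renormalized conditional state is precisely what makes the fidelity computation above collapse to $1-\epsilon$. As an alternative derivation, if one prefers to avoid manipulating subnormalized fidelities, one can invoke Uhlmann's theorem: purify $\rho_{AX}$ by copying the classical register $X$ into an ancilla, project that copy onto the $\Omega=0$ outcomes and renormalize; the resulting vector purifies $\tilde\rho_{AX}$ and has squared overlap $1-\epsilon$ with the original purification, yielding the same bound.
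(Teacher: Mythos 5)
Your proof is correct: the renormalized conditional state $\tilde\rho_{AX} = \frac{1}{1-\epsilon}\sum_{x:\Omega(x)=0}\rho_{A\mid x}\otimes\ketbra{x}_X$ does the job, and your block-diagonal fidelity computation (including the correct handling of the subnormalized terms, giving $F = 1-\epsilon$ exactly) is the standard argument for this lemma. Note that the paper itself does not prove this statement but imports it from~\cite{TL17}; your derivation matches the proof given there, including the key observation that the $1/(1-\epsilon)$ rescaling, rather than a bare projection, is what yields the tight $\sqrt{\epsilon}$ bound.
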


\subsection{Quantum Encryption and Security}

Whenever an adversary $\lA$ is mentioned, it is assumed to be quantum and to have unbounded computational power, and we allow it to perform generalized measurements.

Considering that the scheme introduced in this paper is an encryption scheme with a quantum ciphertext, we rely on the ``quantum encryption of classical messages'' framework developed by Broadbent and Lord~\cite{BL20}.
This framework describes an encryption scheme as a set of parameterized CPTP maps which satisfy certain conditions.
\begin{definition}[Quantum Encryption of Classical Messages] \label{def:qecm}
    Let $n$ be an integer.
    An \emph{$n$-quantum encryption of classical messages} ($n$-QECM) is a tuple of uniform efficient quantum circuits $\lS = (\key, \enc, \dec)$ implementing CPTP maps of the form
    \begin{itemize}
        \item $\Phi_\lambda ^\key \colon \lD (\bC) \rightarrow \lD (\lH_{K, \lambda})$,
        \item $\Phi_\lambda ^\enc \colon \lD (\lH_{K, \lambda} \otimes \lH_{M}) \rightarrow \lD (\lH_{T, \lambda})$, and
        \item $\Phi_\lambda ^\dec \colon \lD (\lH_{K, \lambda} \otimes \lH_{T, \lambda}) \rightarrow \lD(\lH_{M})$,
    \end{itemize}
    where $\lH_{M} = \lQ(n)$ is the plaintext space, $\lH_{T, \lambda} = \lQ(\ell(\lambda))$ is the ciphertext space, and $\lH_{K, \lambda} = \lQ(\kappa(\lambda))$ is the key space for functions $\ell, \kappa \colon \bN^+ \rightarrow \bN^+$.

    For all $\lambda \in \bN^+, k \in \{0,1\}^{\kappa(\lambda)}$, and $m \in \{0,1\}^n$, the maps must satisfy
    \begin{equation}
        \Tr[\ketbra{k} \Phi^\key(1)] > 0 \Rightarrow \Tr[ \ketbra{m} \Phi^\dec _k \circ \Phi^\enc _k \ketbra{m}] = 1,
    \end{equation}
    where $\lambda$ is implicit, $\Phi_k ^\enc$ is the CPTP map defined by $\rho \mapsto \Phi^\enc (\ketbra{k} \otimes \rho )$, and we define $\Phi^\dec _k$ analogously.
    We also define the CPTP map $\Phi^\enc _{k,0} \colon \lD(\lH_{M}) \rightarrow \lD(\lH_{T, \lambda})$ by
    \begin{equation} \label{eq:qecm}
        \rho \mapsto \Phi^\enc _{k} (\ketbra{\boldsymbol{0}})
    \end{equation}
    where $\boldsymbol{0} \in \{0,1\}^n$ is the all-zero bit string, and the CPTP map $\Phi^\enc _{k,1} \colon \lD(\lH_{M}) \rightarrow \lD(\lH_{T, \lambda})$ by
    \begin{equation}
        \rho \mapsto \sum_{m \in \{0, 1\}^n} \Tr[\ketbra{m} \rho] \cdot \Phi^\enc _k (\ketbra{m}).
    \end{equation}
\end{definition}

As part of the security of our scheme, we wish to ensure that should an adversary obtain a copy of the ciphertext and were to know that the original message is one of two hypotheses, she would not be able to distinguish between the hypotheses.
We refer to this notion of security as ciphertext indistinguishability (called indistinguishable security in~\cite{BL20}).
It is best understood in terms of a scheme's resilience to an adversary performing what we refer to as a distinguishing attack.
\begin{definition}[Distinguishing Attack]
    Let $\lS = (\key, \enc, \dec)$ be an $n$-QECM.
    A \emph{distinguishing attack} is a quantum adversary $\lA = (\lA_0, \lA_1)$ implementing CPTP maps of the form
    \begin{itemize}
        \item $A_{0, \lambda} \colon \lD(\bC) \rightarrow \lD(\lH_{M} \otimes \lH_{S, \lambda})$ and
        \item $A_{1, \lambda} \colon \lD(\lH_{T, \lambda} \otimes \lH_{S, \lambda}) \rightarrow \lD(\lQ)$
    \end{itemize}
    where $\lH_{S, \lambda} = \lQ(s(\lambda))$ for a function $s \colon \bN^+ \rightarrow \bN^+$.
\end{definition}

\begin{definition}[Ciphertext Indistinguishability] \label{def:indistinguishable}
    Let $\lS = (\key, \enc, \dec)$ be an $n$-QECM.
    Then we say that $\lS$ has \emph{ciphertext indistinguishability} if for all distinguishing attacks $\lA$ there exists a negligible function~$\eta$ such that
    \begin{equation}
        \mathop{\mathbb{E}}_b \mathop{\mathbb{E}}_{k \leftarrow \lK} \Tr[\ketbra{b} A_{1, \lambda} \circ (\Phi^\enc _{k,b} \otimes \mathds{1}_S ) \circ A_{0, \lambda}(1)] \leq \frac{1}{2} + \eta(\lambda)
    \end{equation}
    where $\lambda$ is implicit on the left-hand side, $b \in \{0,1 \}$, and $\lK_{\lambda}$ is the random variable distributed on $\{0,1\}^{\kappa(\lambda)}$ such that
    \begin{equation}
        \Pr[\lK_\lambda = k] = \Tr[\ketbra{k} \Phi^\key _\lambda (1)].
    \end{equation}
\end{definition}

\section{Security Definitions}\label{sec:definitions}

In this section, we introduce a new description of the certified deletion security notion.
First, however, we must augment our QECM framework to allow it to detect errors on decryption.

\begin{definition}[Augmented Quantum Encryption of Classical Messages]
\label{def:aqecm}
    Let $n$ be an integer.
    Let $\lS = (\key, \enc, \dec)$ be an $n$-QECM.
    An $n$-\emph{augmented quantum encryption of classical messages} (n-AQECM) is a tuple of uniform efficient quantum circuits $\hat{\lS} = (\key, \enc, \widehat{\dec}$, where $\widehat{\dec}$ implements a CPTP map of the form
    \begin{equation}
        \Phi_{\lambda} ^{\widehat \dec} \colon \lD(\lH_{K, \lambda} \otimes \lH_{T, \lambda}) \rightarrow \lD( \lH_M \otimes \lQ ).
    \end{equation}
    For all $\lambda \in \bN^+, k \in \{0,1\}^{\kappa(\lambda)}$, and $m \in \{0,1\}^n$, the maps corresponding to the circuits must satisfy
    \begin{equation}
        \Tr[\ketbra{k} \Phi^\key(1)] > 0 \Rightarrow \Tr[ \ketbra{m} \otimes \ketbra{1} \Phi^{\widehat \dec} _k \circ \Phi^\enc _k \ketbra{m}] = 1,
    \end{equation}
    where $\lambda$ is implicit, $\Phi_k ^\enc$ is the CPTP map defined by $\rho \mapsto \Phi^\enc (\ketbra{k} \otimes \rho )$, and we define $\Phi^\dec _k$ analogously.
\end{definition}
The extra qubit (which will be referred to as a flag), though by itself without any apparent use, may serve as a way to indicate that the decryption process did not proceed as expected in any given run.
In the case of decryption without error, the circuit should output $\ketbra{1}$, and in the case of decryption error, the circuit should output $\ketbra{0}$.
This allows us to define a criterion by which an AQECM might be robust against a certain amount of noise.

Since the original QECM framework will no longer be used for the rest of this paper, we henceforth note that all further references to the QECM framework are in fact references to the AQECM framework.

\begin{definition}[Robust Quantum Encryption of Classical Messages]
    \label{def:robust}
    Let $\lS = (\key, \enc, \dec)$ be an $n$-QECM.
    We say that $\lS$ is $\epsilon$-\emph{robust} if, for all adversaries $\lA$ implementing CPTP maps of the form
    \begin{equation}
        A \colon \lD( \lH_{T, \lambda} ) \rightarrow \lD( \lH_{T, \lambda} ),
    \end{equation}
    and for two distinct messages $m, m' \in \lH_M$, we have that
    \begin{equation}
        \mathop{\mathbb{E}}_{k \leftarrow \lK} \Tr[\ketbra{m'} \otimes \ketbra{1} \Phi_k ^\dec \circ A \circ \Phi_k ^\enc \ketbra{m}] \leq \epsilon.
    \end{equation}
\end{definition}
In other words, a QECM is $\epsilon$-robust if, under interference by an adversary, the event that decryption yields a different message than was encrypted and that the decryption circuit approves of the outcome is less than or equal to $\epsilon$.
This is functionally equivalent to a one-time quantum authentication scheme, where messages are classical (see \emph{e.g.}~\cite{BCG+02,GYZ17,DNS12}).

Our description takes the form of an augmentation of the QECM framework described in \cref{def:aqecm}.
Given a QECM with key $k$ and encrypting message $m$, the certified deletion property should guarantee that the recipient, Bob, cannot do the following two things simultaneously:
\begin{itemize}
    \item Make Alice, the sender, accept his certificate of deletion; and
    \item Given $k$, recover information about $m$.
\end{itemize}
\begin{definition}[Certified Deletion Encryption]\label{def:cde}
    Let $\lS = (\key, \enc, \dec)$ be an $n$-QECM such that
    Let $\del$ and $\ver$ be efficient quantum circuits implemented by CPTP maps of the form
    \begin{itemize}
        \item $\Phi^\del _{\lambda} \colon \lD(\lH_{T, \lambda}) \rightarrow \lD(\lH_{D, \lambda})$
        \item $\Phi^\ver _{\lambda} \colon \lD(\lH_{K, \lambda} \otimes \lH_{D, \lambda}) \rightarrow \lD(\lQ)$
    \end{itemize}
    where $\lH_{D, \lambda} = \lQ(d(\lambda))$ for a function $d \colon \bN^+ \rightarrow \bN^+$.

    For all $\lambda \in \bN^+$, $k \in \{0,1\}^{\kappa(\lambda)}$, and $m \in \{0,1\}^n$, the maps must satisfy
    \begin{equation} \label{eq:cde2}
        \Tr[\ketbra{k} \Phi^{\key}(1)] > 0 \implies \Tr[\ketbra{1} \Phi^\ver \circ \left(\ketbra{k} \otimes \left(\Phi^\del \circ \Phi^\enc _k \ketbra{m} \right) \right)] = 1
    \end{equation}
    where $\lambda$ is implicit.

    We call the tuple $\lS' = (\key, \enc, \dec, \del, \ver)$ an $n$-\emph{certified deletion encryption} ($n$-CDE).
\end{definition}

\begin{definition}[Certified Deletion Attack]
    Let $\lS = (\key, \enc, \dec, \del, \ver)$ be an $n$-CDE.
    A \emph{certified deletion attack} is a quantum adversary $\lA = (\lA_0, \lA_1, \lA_2)$ implementing CPTP maps of the form
    \begin{itemize}
        \item $A_{0, \lambda} \colon \lD(\bC) \rightarrow \lD(\lH_{M} \otimes \lH_{S, \lambda})$,
        \item $A_{1, \lambda} \colon \lD(\lH_{T, \lambda} \otimes \lH_{S, \lambda}) \rightarrow \lD(\lH_{D, \lambda} \otimes \lH_{S, \lambda} \otimes \lH_{T', \lambda})$, and
        \item $A_{2, \lambda} \colon \lD(\lH_{K, \lambda} \otimes \lH_{S, \lambda} \otimes \lH_{T', \lambda}) \rightarrow \lD(\lQ)$
    \end{itemize}
    where $\lH_{S, \lambda} = \lQ(s(\lambda))$ and $\lH_{T', \lambda} = \lQ(\ell'(\lambda))$ for functions $s, \ell' \colon \bN^+ \rightarrow \bN^+$.
\end{definition}

We are now ready to define our notion of certified deletion security. We refer the reader to \cref{sec:intro-definitions} for an informal explanation of the definition, and we recall that notation
$\Phi^\enc_{k,b}$ is defined in \cref{eq:qecm}.

\begin{definition}[Certified Deletion Security]\label{def:cds}
    Let $\lS = (\key, \enc, \dec, \del, \ver)$ be an $n$-CDE.
    For any fixed and implicit $\lambda \in \bN^+$, we define the CPTP map $\Phi^\ver_{k} \colon \lD(\lH_{K, \lambda}\ otimes \lH_{D, \lambda}) \rightarrow \lD(\lQ \otimes \lH_{K, \lambda})$ by
    \begin{equation}
        \rho \mapsto \Phi^\ver(\ketbra{k} \otimes \rho) \otimes \ketbra{k}.
    \end{equation}
    Let $b \in \{0,1\}$, let $\lA$ be a certified deletion attack, and let
    \begin{equation}
        \label{eq:cdsec}
        p_b = \mathop[{\mathbb{E}}_{k \leftarrow \lK} \Tr[(\ketbra{1, 1}) (\mathds{1} \otimes A_{2}) \circ (\Phi^\ver _k \otimes \mathds{1}_{ST'}) \circ A_{1} \circ (\Phi^\enc_{k,b} \otimes \mathds{1}_S ) \circ A_{0} (1)],
    \end{equation}
    where $\lambda$ is implicit, and where $\lK_{\lambda}$ is the random variable distributed on $\{0,1\}^{\kappa(\lambda)}$ such that
    \begin{equation}
        \Pr[\lK_\lambda = k] = \Tr[\ketbra{k} \Phi^\key_\lambda (1)].
    \end{equation}
    Then we say that $\lS$ is $\eta$-\emph{certified deletion secure} if, for all certified deletion attacks~$\lA$, there exists a negligible function~$\eta$ such that
    \begin{align}
        \label{eq:cdsec}
        \abs{p_0 - p_1} \leq \eta(\lambda).
    \end{align}
\end{definition}

\section{Constructing an Encryption Scheme with Certified Deletion}
\label{sec:pm}

\cref{sch:pmcd} aims to exhibit a noise-tolerant prepare-and-measure $n$-CDE with ciphertext indistinguishability and certified deletion security.

\begin{table}
    {\small
    \begin{tabular}{ll}
        \nc{$M_A ^{\theta, x}$}{Measurement operator acting on system $A$ with setting $\theta$ and outcome $x$}
        \nc{$\lM_{A \rightarrow X|S^\Theta} ^\lI$}{Measurement map applied on the qubits of system $A$ indexed by $\lI$, with setting $S^\Theta$,}
        \nc{}{and outcome stored in register $X$}
        \hline
        \nc{$\lambda$}{Security parameter}
        \nc{$n$}{Length, in bits, of the message}
        \nc{$m = \kappa(\lambda)$}{Total number of qubits sent from encrypting party to decrypting party}
        \nc{$k$}{Length, in bits, of the string used for verification of deletion}
        \nc{$s = m - k$}{Length, in bits, of the string used for extracting randomness}
        \nc{$\tau = \tau(\lambda)$}{Length, in bits, of error correction hash}
        \nc{$\mu = \mu(\lambda)$}{Length, in bits, of error syndrome}
        \nc{$\theta$}{Basis in which the encrypting party prepares her quantum state}
        \nc{$\delta$}{Threshold error rate for the verification test}
        \hline
        \nc{$\Theta$}{Set of possible bases from which $\theta$ is chosen}
        \nc{$\fH_{\pa}$}{Universal$_2$ family of hash functions used in the privacy amplification scheme}
        \nc{$\fH_{\ec}$}{Universal$_2$ family of hash functions used in the error correction scheme}
        \nc{$H_{\pa}$}{Hash function used in the privacy amplification scheme}
        \nc{$H_{\ec}$}{Hash function used in the error correction scheme}
        \nc{$S^\Theta$}{Seed for the choice of $\theta$}
        \nc{$S^{H_\pa}$}{Seed for the choice of the hash function used in the error correction scheme}
        \nc{$S^{H_\ec}$}{Seed for the choice of the hash function used in the privacy amplification scheme}
        \nc{$\synd$}{Function that computes the error syndrome}
        \nc{$\corr$}{Function that computes the corrected string}
    \end{tabular}
    }
    \caption{Overview of nomenclature used in~\cref{sec:pm} and \cref{sec:sec}}
    \label{tab:pm-nom}
\end{table}

\begin{scheme}[Prepare-and-Measure Certified Deletion] \label{sch:pmcd}
    Let $n, \lambda, \tau, \mu, m = s + k$ be integers.
    Let $\Theta = \{ \theta \in \{0,1\}^m \mid \omega(\theta) = k \}$.
    Let both $\fH_{\ec} \coloneqq \{ h \colon \{ 0,1 \}^s \rightarrow \{ 0,1 \}^\tau \}$ and $\fH_{\pa} \coloneqq \{ h \colon \{ 0, 1 \}^s \rightarrow \{0, 1\}^n \}$ be universal$_2$ families of hash functions.
    Let $\synd \colon \{0,1\}^n \rightarrow \{0,1\}^\mu$ be an error syndrome function, let $\corr \colon \{0,1\}^n \times \{0,1\}^\mu \rightarrow \{0,1\}^n$ be the corresponding function used to calculate the corrected string, and let $\delta \in [0,1]$ be a tolerated error rate for verification.
    We define a \emph{noise-tolerant prepare-and-measure $n$-CDE} by Circuits 1-5. This scheme satisfies Equation $\eqref{eq:cde2}$.
It is therefore an $n$-CDE.
\end{scheme}

    \begin{algorithm}
    \DontPrintSemicolon
    \caption{The key generation circuit $\key$.}
        \SetKwInOut{Input}{Input}
        \SetKwInOut{Output}{Output}

        \Input{None.}
        \Output{A key state $\rho \in  \lD (\lQ(k + m + n + \mu + \tau) \otimes \fH_\pa \otimes \fH_\ec))$.}

        Sample $\theta \unif \Theta$.\;
        Sample $r|_{\bar{\lI}} \unif \{0,1\}^k$ where $\bar{\lI} = \{ i \in [m] \mid \theta_i = 1 \}$.\;
        Sample $u \unif \{0,1\}^n$.\;
        Sample $d \unif \{0,1\}^\mu$.\;
        Sample $e \unif \{0,1\}^\tau$.\;
        Sample $H_{\pa} \unif \fH_{\pa}$.\;
        Sample $H_{\ec} \unif \fH_{\ec}$.\;
        Output $\rho = \ketbra{r|_{\bar{\lI}}, \theta, u, d, e, H_\pa, H_\ec}$.

    \end{algorithm}

    \begin{algorithm}
    \DontPrintSemicolon
    \caption{The encryption circuit $\enc$.}
        \SetKwInOut{Input}{Input}
        \SetKwInOut{Output}{Output}

        \Input{A plaintext state $\ketbra{\msg} \in \lD(\lQ(n))$ and a key state $\ketbra{r|_{\bar{\lI}}, \theta, u,
 d, e, H_{\pa}, H_{\ec}} \in \lD(\lQ(k + m + n + \mu + \tau) \otimes \fH_\pa \otimes \fH_\ec)$.}
        \Output{A ciphertext state $\rho \in \lD(\lQ(m+n+\tau+\mu))$.}

        Sample $r|_{\lI} \unif \{0,1\}^s$ where $\lI = \{ i \in [m] \mid \theta_i = 0 \}$.\;
        Compute $x = H_{\pa} (r|_\lI)$ where $\lI = \{i \in [m] \mid \theta_i = 0\}$.\;
        Compute $p = H_{\ec} (r|_\lI) \oplus d$.\;
        Compute $q = \synd(r|_\lI) \oplus e$.\;
        Output $\rho = \ketbra{r^\theta} \otimes \ketbra{\msg \oplus x \oplus u, p, q}$.\;

    \end{algorithm}

    \begin{algorithm}
    \DontPrintSemicolon
    \caption{The decryption circuit $\dec$.}
        \SetKwInOut{Input}{Input}
        \SetKwInOut{Output}{Output}

        \Input{A key state $\ketbra{r|_{\bar{\lI}}, \theta, u, d, e, H_{\pa}, H_{\ec}} \in \lD(\lQ(k + m + n + \mu +
         \tau) \otimes \fH_\pa \otimes \fH_\ec)$ and a ciphertext $\rho \otimes \ketbra{c, p, q} \in \lD(\lQ(m + n + \mu + \tau))$.}
        \Output{A plaintext state $\sigma \in \lD(\lQ(n))$ and an error flag $\gamma \in \lD(\lQ)$.}

        Compute $\rho' = \sH^\theta \rho \sH^\theta$.\;
        Measure $\rho '$ in the computational basis.
Call the result $r$.\;
        Compute $r' = \corr(r|_\lI, q \oplus e)$ where $\lI = \{i \in [m] \mid \theta_i = 0\}$.\;
        Compute $p' = H_\ec(r') \oplus d$.\;
        If $p \neq p'$, then set $\gamma = \ketbra{0}$.
        Else, set $\gamma = \ketbra{1}$.\;
        Compute $x' = H_\pa(r')$.\;
        Output $\sigma \otimes \gamma = \ketbra{c \oplus x' \oplus u} \otimes \gamma$.\;
    \end{algorithm}

    \begin{algorithm}
    \DontPrintSemicolon
    \caption{The deletion circuit $\del$.}
        \SetKwInOut{Input}{Input}
        \SetKwInOut{Output}{Output}

        \Input{A ciphertext $\rho \otimes \ketbra{c, p, q} \in \lD(\lQ(m + n + \mu + \tau))$.}
        \Output{A certificate string state $\sigma \in \lD(\lQ(m))$.}

        Measure $\rho$ in the Hadamard basis.
        Call the output $y$.\;
        Output $\sigma = \ketbra{y}$.
    \end{algorithm}

    \begin{algorithm}
    \DontPrintSemicolon
    \caption{The verification circuit $\ver$.}
        \SetKwInOut{Input}{Input}
        \SetKwInOut{Output}{Output}

        \Input{A key state $\ketbra{r|_{\bar{\lI}}, \theta, u, d, e, H_{\pa}, H_{\ec}} \in \lD(\lQ(k + m + n + \mu +
         \tau) \otimes \fH_{\pa} \otimes \fH_{\ec})$ and a certificate string state $\ketbra{y} \in \lD(\lQ(m))$.}
        \Output{A bit.}

        Compute $\hat y' = \hat y|_{\bar{\lI}}$ where $\bar{\lI} = \{i \in [m] \mid \theta_i = 1\}$.\;
        Compute $q = r|_{\bar{\lI}}$.\;
        If $\omega(q \oplus \hat y') < k \delta,$ output $1$.
        Else, output $0$.\;
    \end{algorithm}

\section{Security Analysis}
\label{sec:sec}

In this section, we present the security analysis for \cref{sch:pmcd}: in \cref{sec:indi-security}, we show the security of the scheme in terms of an encryption scheme, then, in~\cref{sec:correctness}, we show that the scheme is correct and robust. Finally in \cref{ssec:cdsec}, we show that the scheme is a certified deletion scheme.

\subsection{Ciphertext Indistinguishability}\label{sec:indi-security}

In considering whether~\cref{sch:pmcd} has ciphertext indistinguishability (\cref{def:indistinguishable}), one need only verify that an adversary, given a ciphertext, would not be able to discern whether a known message was encrypted.
\begin{theorem}
    \cref{sch:pmcd} has ciphertext indistinguishability.
\end{theorem}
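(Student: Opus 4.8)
The plan is to show that \cref{sch:pmcd} satisfies \cref{def:indistinguishable} by reducing ciphertext indistinguishability to the hiding property of a one-time pad, with the quantum portion of the ciphertext handled by a generalized quantum one-time pad argument. First I would fix the key $k=(r,\theta,u,d,e,H_\pa,H_\ec)$ sampled by $\key$ and examine the ciphertext $\rho = \ketbra{r^\theta} \otimes \ketbra{\msg \oplus x \oplus u,\, p,\, q}$, where $x = H_\pa(r|_\lI)$, $p = H_\ec(r|_\lI)\oplus d$, and $q = \synd(r|_\lI)\oplus e$. The crucial observation is that $u$ is sampled uniformly and independently from $\{0,1\}^n$ and appears nowhere else in the ciphertext, so conditioned on all other key material, $\msg \oplus x \oplus u$ is uniformly distributed on $\{0,1\}^n$; likewise $d$ and $e$ are independent uniform strings that mask $p$ and $q$ respectively, so $(p,q)$ is uniform and independent of everything else. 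Hence the classical register $\ketbra{\msg \oplus x \oplus u, p, q}$, averaged over $u,d,e$, equals the fully mixed state on $\lQ(n+\mu+\tau)$, independently of $\msg$.

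Next I would handle the quantum register $\ketbra{r^\theta}$. Here $\theta$ is fixed but $r$ is uniform over $\{0,1\}^m$; averaging $\ketbra{r^\theta}$ over $r \unif \{0,1\}^m$ yields $\frac{1}{2^m} 1_{\lQ(m)}$, the fully mixed state, regardless of $\theta$ (this is the standard fact underlying the quantum one-time pad, and it holds basis-by-basis since $\{\sH^{\theta_i}\ket{0},\sH^{\theta_i}\ket{1}\}$ is an orthonormal basis for each $i$). The subtlety is that $r$ is \emph{not} independent of the classical register, since $x$, $p$, and $q$ all depend on $r|_\lI$. To deal with this I would average over $u$ first, \emph{holding $r$ fixed}: this already randomizes $\msg \oplus x \oplus u$ to uniform and decouples it from $r$ and $\msg$; similarly averaging over $d$ and $e$ decouples $p$ and $q$ from $r$. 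Once the classical register is (for every fixed $r$) the fully mixed state independent of $\msg$, I can then freely average the quantum register over $r$ to obtain $\frac{1}{2^m}1_{\lQ(m)}$. Putting the pieces together, $\mathbb{E}_{k \leftarrow \lK} (\Phi^\enc_{k,b}\otimes \mathds{1}_S)\circ A_{0,\lambda}(1)$ is a fixed state — the tensor product of the maximally mixed ciphertext with the adversary's side register $S$ — that does not depend on $b$.

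Finally, since the averaged input to $A_{1,\lambda}$ is identical for $b=0$ and $b=1$, the output bit has exactly the same distribution in both cases, so $\mathbb{E}_b \mathbb{E}_{k\leftarrow \lK}\Tr[\ketbra{b} A_{1,\lambda}\circ(\Phi^\enc_{k,b}\otimes\mathds{1}_S)\circ A_{0,\lambda}(1)] = \tfrac12$, which satisfies \cref{def:indistinguishable} with $\eta \equiv 0$ (in particular negligible). I expect the main obstacle to be purely bookkeeping: one must be careful about the order of averaging (average out $u,d,e$ before $r$) so that the dependence of $x,p,q$ on $r|_\lI$ does not obstruct the one-time-pad argument, and one must correctly track the adversary's side-information register $S$ through the encryption map, noting that $\Phi^\enc_{k,b}$ acts only on the plaintext register extracted from $A_{0,\lambda}$'s output while $S$ is passed through untouched. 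There is no genuine difficulty here — the argument is a structured application of the classical and quantum one-time pads — but the statement must be assembled cleanly to make the independence from $b$ manifest.
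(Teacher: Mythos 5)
Your proposal is correct and rests on the same core mechanism as the paper's proof: the fresh uniform pad $u$ makes $\msg \oplus x \oplus u$ uniformly distributed independently of $\msg$, so the key-averaged ciphertext is identical for $b=0$ and $b=1$, giving $\eta \equiv 0$. The paper stops at the equality of the two averaged states (it never needs the quantum register $\ketbra{r^\theta}$ to be fully mixed, nor the masking of $p$ and $q$ by $d$ and $e$), so your additional steps are sound but strictly stronger than required.
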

\begin{proof}
    For any distinguishing attack $\lA = (\lA_0, \lA_1)$, any state $\rho = \rho_S \otimes \ketbra{\msg} \in \lD(\lH_S \otimes \lQ(n))$, and where $k = (r, \theta, u, d, e, H_\pa, H_\ec) \in \{ 0,1 \}^{m + n + \mu + \tau} \times \fH_\pa \times \fH_\ec$ is a key, we have that
    \begin{align*}
        \E_k \left(\dOne_S \otimes \Phi ^\enc _{k,1} \right) (\rho) &= \frac{1}{2^{m + n + \mu + \tau} |\fH_\pa| |\fH_\ec|} \sum_{k} \rho_S \otimes \ketbra{r^\theta} \otimes \ketbra{\msg \oplus x \oplus u, p, q} \\
        &= \frac{1}{2^{m + n + \mu + \tau} |\fH_\pa| |\fH_\ec|} \sum_{k} \rho_S \otimes \ketbra{r^\theta} \otimes \ketbra{x \oplus u, p, q} \\
        &= \E_k \left(\dOne_S \otimes \Phi ^\enc _{k,0} \right) (\rho),
    \end{align*}
    where the second equality is due to the uniform distribution of both $\msg \oplus x \oplus u$ and $u$.
    Therefore, an adversary can do no better than guess $b$ correctly half of the time in a distinguishing attack.
    This implies perfect ciphertext indistinguishability with $\eta = 0$.
\end{proof}

\subsection{Correctness}
\label{sec:correctness}

Thanks to the syndrome and correction functions included in the scheme, the decryption circuit is robust against a certain amount of noise; that is, below such a level of noise, the decryption circuit outputs Alice's original message with high probability.
This noise threshold is determined by the distance of the linear code used.
In particular, where $\Delta$ is the distance of the code, decryption should proceed normally as long as fewer than $\lfloor \frac{\Delta - 1}{2} \rfloor$ errors occur to the quantum encoding of $r|_\lI$ during transmission through the quantum channel.

To account for greater levels of noise (such as may occur in the presence of an adversary), we show that the error correction measures implemented in~\cref{sch:pmcd} ensure that errors in decryption are detected with high probability.
In other words, we show that the scheme is $\epsilon_\rob$-robust, where $\epsilon_\rob \coloneqq \frac{1}{2^\tau}$.

Recall that $\tau$ is the length of the error correction hash, and that $\mu$ is the length of the error correction syndrome. 
Consider that Bob has received a ciphertext state $\rho_B \otimes \ketbra{c,p,q} \in \lD(\lQ(m + n + \mu + \tau))$ and a key $(r|_{\bar{\lI}}, \theta, u, d, e, H_\pa, H_\ec) \in \Theta \times \{0,1\}^{n + \mu + \tau} \times \fH_\pa \times \fH_\ec$.
Given $\theta$, Bob learns $\lI$.
This allows him to perform the following measurement on $\rho_B$:
\begin{equation}
    \lM_{B \rightarrow Y} ^\lI (\cdot) = \sum_{y \in \{0,1\}^s} \ket{y}_Y  \left( M_{B_\lI} ^{0, y} \right) \cdot \left( M_{B_\lI} ^{0, y} \right)^\dagger \bra{y}_Y
\end{equation}
The new register $Y$ contains a hypothesis of the random string Alice used in generating $c$.
Since $\rho_B$ was necessarily transmitted through a quantum channel, it may have been altered due to noise.
Bob calculates a corrected estimate: $\hat{x} = \corr(y, q \oplus e)$.
Finally, he compares a hash of the estimate with $p \oplus d$, which is the hash of Alice's corresponding randomness.
This procedure is represented by a function $\ec \colon \{0,1\}^s \times \{0,1\}^\mu \times \fH_\ec \rightarrow \{0,1\}$ defined by
\begin{equation}
    \ec(x,y) =
    \begin{cases}
    0 \quad &\textrm{if}~H_\ec (x) \neq y\\
    1 \quad &\textrm{else.}
    \end{cases}
\end{equation}
To record the value of this test, we use a flag $F^\ec \coloneqq \ec(\hat{x}, p \oplus d)$.
It is very unlikely that both $F^\ec = 1$ and the outcome of Bob's decryption procedure is not equal to Alice's originally intended message.
This is shown in the following proposition, the proof of which follows that of an analogous theorem in~\cite{TL17}.

\begin{theorem}
    If $r|_{\lI} \in \{0,1\}^m$ is the random string Alice samples in encryption, and $\hat{x} = \corr(y, q \oplus e)$, then
    \begin{equation}
        \Pr[H_\pa(r|_\lI) \neq H_\pa(\hat{x}) \wedge F^\ec = 1] \leq \frac{1}{2^\tau}.
    \end{equation}
\end{theorem}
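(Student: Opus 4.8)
The plan is to collapse the stated event down to a single hash collision and then invoke the universal$_2$ property of $\fH_\ec$.

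First I would unfold the flag $F^\ec$. By construction of $\enc$ we have $p = H_\ec(r|_\lI)\oplus d$, so $p\oplus d = H_\ec(r|_\lI)$; since $F^\ec = \ec(\hat{x},p\oplus d)$ and $\ec(x,y)=1$ precisely when $H_\ec(x)=y$, the event $F^\ec = 1$ is the same as the event $H_\ec(\hat{x}) = H_\ec(r|_\lI)$. Moreover, because $H_\pa$ is a function, $H_\pa(r|_\lI)\neq H_\pa(\hat{x})$ forces $\hat{x}\neq r|_\lI$. Hence
\[
   \{\,H_\pa(r|_\lI)\neq H_\pa(\hat{x})\ \wedge\ F^\ec = 1\,\}\ \subseteq\ \{\,\hat{x}\neq r|_\lI\ \wedge\ H_\ec(\hat{x})=H_\ec(r|_\lI)\,\},
\]
so it suffices to bound the probability of this latter collision event by $2^{-\tau}$.

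Next I would argue that, conditioned on Alice's string $r$, the estimate $\hat{x}$ is independent of the choice of $H_\ec$. Fix $r$ (hence $r|_\lI$). We have $\hat{x} = \corr(y, q\oplus e)$ with $q\oplus e = \synd(r|_\lI)$, which depends on $r$ only, and with $y$ the outcome of measuring the possibly-noisy (or adversarially processed) quantum register; the quantum register and every classical register of the ciphertext except $p$ are produced without any reference to $H_\ec$. The only $H_\ec$-dependence is through $p = H_\ec(r|_\lI)\oplus d$, but $d$ is uniform and independent of $H_\ec$, so for fixed $r|_\lI$ the map $(H_\ec,d)\mapsto(H_\ec,p)$ is a bijection and $(H_\ec,p)$ is again uniform; in particular $H_\ec$ is independent of $p$. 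Since any channel applied to the ciphertext, and Bob's subsequent measurement, can depend on $H_\ec$ only through the transmitted data, $\hat{x}$ depends on $H_\ec$ at most through $p$, and is therefore conditionally (on $r$) independent of $H_\ec$. (In the purely noisy, non-adversarial reading, $y$ and hence $\hat{x}$ involve no $H_\ec$ at all, and this step is immediate.) This \emph{conditional independence} is the one genuinely delicate point; it mirrors the analogous step in~\cite{TL17}, and everything else is bookkeeping.

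Finally I would close with universal$_2$ hashing. For fixed $r$ and any $\xi\neq r|_\lI$, the universal$_2$ property of $\fH_\ec$ gives $\Pr_{H_\ec}[H_\ec(\xi)=H_\ec(r|_\lI)]\leq 2^{-\tau}$. Using the conditional independence just established,
\[
   \Pr[\hat{x}\neq r|_\lI \ \wedge\ H_\ec(\hat{x})=H_\ec(r|_\lI)\mid r] \;=\; \sum_{\xi\neq r|_\lI}\Pr[\hat{x}=\xi\mid r]\cdot\Pr_{H_\ec}[H_\ec(\xi)=H_\ec(r|_\lI)] \;\leq\; \frac{1}{2^\tau}.
\]
Averaging over $r$ preserves the bound, and combining with the inclusion from the first step yields $\Pr[H_\pa(r|_\lI)\neq H_\pa(\hat{x})\wedge F^\ec=1]\leq 2^{-\tau}$, as claimed.
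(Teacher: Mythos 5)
Your proof is correct and follows essentially the same route as the paper's: reduce the event to a collision $H_{\ec}(\hat{x}) = H_{\ec}(r|_\lI)$ with $\hat{x} \neq r|_\lI$ and invoke the universal$_2$ property of $\fH_{\ec}$. Your explicit justification of the conditional independence of $\hat{x}$ from $H_{\ec}$ (via the one-time pad $d$ masking $p$) is a point the paper's proof passes over silently when it factors the joint probability, so your version is, if anything, the more careful one.
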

\begin{proof}
\begin{align}
    \Pr[H_\pa(r|_\lI) \neq H_\pa(\hat{x}) \wedge F^\ec = 1]
    &= \Pr[H_\pa(r|_\lI) \neq H_\pa(\hat{x}) \wedge H_\ec(p \oplus d) = H_\ec(\hat{x})]\\
    &= \Pr[H_\pa(r|_\lI) \neq H_\pa(\hat{x}) \wedge H_\ec(r|_\lI) = H_\ec(\hat{x})]\\
    &\leq \Pr[r|_\lI \neq \hat{x} \wedge H_\ec(r|_\lI) = H_\ec(\hat{x})]\\
    &= \Pr[r|_\lI \neq \hat{x}] \Pr[H_\ec(r|_\lI) = H_\ec(\hat{x})]\\
    &\leq \Pr[H_\ec(r|_\lI) = H_\ec(\hat{x}) \mid r|_\lI \neq \hat{x}]\\
    &\leq \frac{1}{ \lVert \fH_\ec \rVert }\\
    &= \frac{1}{2^\tau}.\qedhere
\end{align}
\end{proof}

\subsection{Certified Deletion Security}\label{ssec:cdsec}

We now prove certified deletion security of~\cref{sch:pmcd}.
Our technique consists in formalizing a game (\cref{game:pm}) that corresponds to the security definition (\cref{def:cds}) applied to \cref{sch:pmcd}.
Next, we develop an entanglement-based sequence of interactions (\cref{game:epr}) which accomplish the same task as in the previous Game.
We analyze this game and afterwards we show formally that the aforementioned analysis, via its relation to~\cref{game:pm}, implies the certified deletion security of~\cref{sch:pmcd}.
To begin, we describe a game which exhibits a certified deletion attack on~\cref{sch:pmcd}, and which thus allows us to examine whether the scheme has certified deletion security.
In what follows, the challenger represents the party who would normally encrypt and send the message (Alice), and the adversary $\lA$ represents the recipient (Bob).
The adversary sends the challenger a candidate message $\msg_0 \in \{0,1\}^n$ and Alice chooses, with uniform randomness, whether to encrypt $0^n$ or $\msg_0$; security holds if, for any adversary, the probabilities of the following two events are negligibly close:
\begin{itemize}
    \item verification passes \emph{and} Bob outputs 1, in the case that Alice encrypted $0^n$;
    \item verification passes \emph{and} Bob output 1,  in the case that Alice encrypted $\msg_0$.
\end{itemize}

\begin{game}[Prepare-and-Measure Game]\label{game:pm}
    Let $\lS = (\key, \enc, \dec, \del, \ver)$ be an $n$-CDE with $\lambda$ implicit, and with circuits defined as in~\cref{sch:pmcd}.
    Let $\lA = (\lA_0, \lA_1, \lA_2)$ be a certified deletion attack.
    The game is parametric in $b \unif \{0,1\}$ and is called \cref{game:pm}$(b)$..
    \begin{enumerate}
        \item Run $\ketbra{\msg_0}_M \otimes \rho_S \leftarrow A_0 (1)$.
        Generate
        \begin{equation}
            \ketbra{\theta, u, d, e, H_{\pa}, H_{\ec}, r|_{\bar{\lI}}}_K \leftarrow \Phi^\key.
        \end{equation}
        Denote
        \begin{equation}
            \msg \coloneqq \begin{cases}
            0^n \quad &\textrm{if}~b = 0\\
            \msg_0 \quad &\textrm{if}~b = 1.
            \end{cases}
        \end{equation}
        Compute
        \begin{align}
            \begin{split}
                \ketbra{r^\theta}_T &\otimes \ketbra{\msg \oplus x \oplus u, p, q}_T \\
                &\leftarrow \Phi^\enc (\ketbra{\theta, u, d, e, H_{\pa}, H_{\ec}, r|_{\bar{\lI}}}_K \otimes \ketbra{\msg}_M).
            \end{split}
        \end{align}
        \item Run
            \begin{align}
                \ketbra{y}_D &\otimes \rho'_S \otimes \rho_{T'} \leftarrow A_1 (\ketbra{r^\theta}_T \otimes \ketbra{\msg \oplus x \oplus u, p, q}_T \otimes \rho_S).
            \end{align}
        Compute
        \begin{equation}
            \Gamma(ok) \leftarrow \Phi^\ver (\ketbra{\theta, u, d, e, H_{\pa}, H_{\ec}, r|_{\bar{\lI}}}_K \otimes \ketbra{y}_D).
        \end{equation}
        \item If $ok = 1$, run
            \begin{equation}
                \ketbra{b'} \leftarrow A_2 (\ketbra{r|_{\bar{\lI}}, \theta, u, d, e, H_{\pa}, H_{\ec}}_{K'} \otimes \rho'_S \otimes \rho_{T'});
            \end{equation}
            else, $b' \coloneqq 0$.
    \end{enumerate}
\end{game}

Let $p_b$ be the probability that the output of \cref{game:pm}$(b)$ is~1. Comparing \cref{game:pm} with \cref{def:cds}, we note that the former runs the adversary to the end only in the case that $ok = 1$, while the latter runs the adversary to the end in both cases. However, the obtained distribution for  $p_b$ is the same, since in \cref{game:pm}, $p_b=1$ whenever the adversary outputs~$1$ \emph{and} $ok = 1$. Hence we wish to bound $\abs{p_0 - p_1}$ in \cref{game:pm}.
Instead of directly analyzing~\cref{game:pm}, we analyze a game wherein the parties use entanglement; this allows us to express the game in a format that is conducive for the analysis that follows.

\begin{game}[EPR Game]\label{game:epr}
    Alice is the sender, and Bob is the recipient and adversary.
    The game is parametric in $b \unif \{0,1\}$ and is called \cref{game:epr}$(b)$.
    \begin{enumerate}
        \item  \label{epr-game-step1}
        Bob selects a string~$\msg_0 \in \{0,1\}^n$ and sends~$\msg_0$ to Alice.
        Bob prepares a tripartite state~$\rho_{ABB'} \in \lD(\lQ(3m))$ where each system contains~$m$ qubits.
        Bob sends the $A$ system to Alice and keeps the systems $B$ and $B'$.
        Bob measures the $B$ system in the Hadamard basis and obtains a string~$y \in \{0,1\}^m$.
        Bob sends $y$ to Alice.
        \item \label{epr-game-step2}
            Alice samples $\theta \unif \Theta$, $r|_{\bar{\lI}} \unif \{0,1\}^k$, $u \unif \{0,1\}^n$, $d \unif \{0,1\}^\mu$, $e \unif \{0,1\}^\tau$, $H_{\pa} \unif \fH_\pa$, and $H_\ec \unif \fH_\ec$.
        She applies a CPTP map to system $A$ which measures~$A_i$ according to the computational basis if $\theta_i = 0$ and the Hadamard basis if $\theta_i = 1$.
        Call the result $r$.
        Let $\lI = \{i \in [m] \mid \theta_i = 0\}$.
        Alice computes $x = H_\pa (r|_\lI), p = H_\ec (r|_\lI) \oplus d,$ and $q = \synd(r|_\lI) \oplus e$.
        Alice selects a message:
        \begin{equation}
            \msg \coloneqq \begin{cases}
            0^n \quad &\textrm{if}~b = 0\\
            \msg_0 \quad &\textrm{if}~b = 1.
            \end{cases}
        \end{equation}
        If $\omega( y \oplus r|_{\bar{\lI}} ) < k \delta$, $ok \coloneqq 1$ and Alice sends \begin{equation}(\msg \oplus x \oplus u, r|_{\bar{\lI}}, \theta, u, d, e, p, q, H_\pa, H_\ec)\end{equation} to Bob.
        Else, $ok \coloneqq 0$ and $b \coloneqq 0$.
        \item \label{epr-game-step3}
        If $ok = 1$, Bob computes
        \begin{align}
            \begin{split}
                &\ketbra{b'} \leftarrow \lE (\rho_{B'} \otimes \\&\ketbra{\msg \oplus x \oplus u, \msg_0, r_{\bar{\lI}}, \theta, u, d, e, p, q, H_\pa, H_\ec})
            \end{split}
        \end{align} for some CPTP map $\lE$; else $b' \coloneqq 0$.
    \end{enumerate}
\end{game}

\cref{game:epr} is intended to model a purified version of~\cref{game:pm}.
Note that Bob's measuremement of~$B$ in the Hadamard basis is meant to mimic the~$\del$ circuit of~\cref{sch:pmcd}.
Although it may seem strange that we impose a limitation of measurement basis on Bob here, it is in fact no limitation at all; indeed, since Bob prepares~$\rho_{ABB'}$, he is in total control of the state that gets measured, and hence may assume an arbitrary degree of control over the measurement outcome.
Therefore, the assumption that he measures in the Hadamard basis is made without loss of generality.

It may also appear that the adversary in~\cref{game:pm} has more information when producing the deletion string than Bob in~\cref{game:epr}. This, however, is not true, as the adversary in~\cref{game:pm} has only received information from Alice that appears to him to be uniformly random (as mentioned, the statement is formalized later, in \cref{ss:secred}).
In order to further the analysis, we assign more precise notation for the maps described in~\cref{game:epr}.

\paragraph*{Bob's measurements.}

Measurement of Bob's system~$B$ of~$m$ qubits in Step~\ref{epr-game-step1} is represented using two CPTP maps: one acting on the systems in~$\lI$, with outcome recorded in register~$Y$; and one acting on the systems in~$\bar \lI$, with outcome recorded in~$W$.
Note, however, that Bob has no access to~$\theta$, and therefore has no way of determining~$\lI$.
The formal separation of registers~$Y$ and~$W$ is simply for future ease of specifying the qubits to which we refer.

Recall the definition of the measurements $M_B ^{x,y}$ from~\cref{ss:prelim-msmt}.

The first measurement, where the outcome is stored in register~$Y$, is defined~by
\begin{equation}
    \lM_{B \rightarrow Y} ^\lI (\cdot) = \sum_{y \in \{0,1\}^s} \ket{y}_Y  \left( M_{B_\lI} ^{1, y} \right) \cdot \left( M_{B_\lI} ^{1, y} \right)^\dagger \bra{y}_Y
\end{equation}
and the second, where the outcome is stored in register~$W$, is defined by
\begin{equation}
    \lM_{B \rightarrow W} ^{\bar \lI} ( \cdot ) = \sum_{w \in \{0,1\}^k} \ket{w}_W \left ( M_{B_{\bar \lI}} ^{1, w} \right) \cdot \left( M_{B_{\bar \lI}} ^{1, w} \right)^\dagger \bra{w}_W,
\end{equation}
where~$M_{B_\lI} ^{1, y} \coloneqq \bigotimes_{i \in \lI} M_{B_i} ^{1, y_i}$, and the definition of~$M_{B_{\bar \lI}} ^{1, w}$ is analogous.

\paragraph*{Alice's measurements.}

We represent the randomness of Alice's sampling using seed registers.
Thus, the randomness used for Alice's choice of basis is represented as
\begin{equation}
    \rho_{S^\Theta} = \frac{1}{\binom mk} \sum_{\theta \in \Theta} \ketbra{\theta}_{S^\Theta}.
\end{equation}
Similarly, Alice's randomness for choice of a hash function for privacy amplification is represented as
\begin{equation}
    \rho_{S^{H_\pa}} = \frac{1}{|\fH_\pa|} \sum_{h \in \fH_\pa} \ketbra{h}_{S^{H_\pa}}.
\end{equation}

Recall that $m = s + k$, where $k$ is the weight of all strings in $\Theta$.
Measurement of Alice's system $A$ of $m$ qubits in Step~\ref{epr-game-step2} is represented using two CPTP maps: one acting on the systems in $\lI$, with outcome recorded in register~$X$ (by definition, these qubits are measured in the computational basis); and one acting on the systems in $\bar \lI$, with outcome recorded in register~$V$ (by definition, these qubits are measured in the Hadamard basis).
\begin{equation*}
    \lM_{A \rightarrow X | S^\Theta} ^\lI (\cdot) = \sum_{\theta \in \Theta} \sum_{x \in \{0,1\}^s} \ket{x}_X  \left( M_{A_\lI} ^{0, x} \otimes \ketbra{\theta}_{S^\Theta} \right) \cdot \left( M_{A_\lI} ^{0, x} \otimes \ketbra{\theta}_{S^\Theta} \right)^\dagger \bra{x}_X;
\end{equation*}
and the second measurement, where the outcome is stored in register $V$, is defined~by
\begin{equation*}
    \lM_{A \rightarrow V | S^\Theta} ^{\bar \lI} ( \cdot ) = \sum_{\theta \in \Theta} \sum_{v \in \{0,1\}^k} \ket{v}_V \left ( M_{A_{\bar \lI}} ^{1, v} \otimes \ketbra{\theta}_{S^\Theta} \right) \cdot \left( M_{A_{\bar \lI}} ^{1, v} \otimes \ketbra{\theta}_{S^\Theta} \right)^\dagger \bra{v}_V,
\end{equation*}
where $M_{A_\lI} ^{0, x} \coloneqq \bigotimes_{i \in \lI} M_{A_i} ^{0, x_i}$ and the definition of $M_{A_{\bar \lI}} ^{1, v}$ is analogous.

We also introduce a hypothetical measurement for the sake of the security analysis.
Consider the case where Alice measures all of her qubits in the Hadamard basis.
In this case, instead of $\lM_{A \rightarrow X | S^\Theta} ^\lI$, Alice would use the measurement
\begin{equation*}
    \lM_{A \rightarrow Z | S^\Theta} ^\lI (\cdot) = \sum_{\theta \in \Theta} \sum_{z \in \{0,1\}^s} \ket{z}_Z  \left( M_{A_\lI} ^{1, z} \otimes \ketbra{\theta}_{S^\Theta} \right) \cdot \left( M_{A_\lI} ^{1, z} \otimes \ketbra{\theta}_{S^\Theta} \right)^\dagger \bra{z}_Z.
\end{equation*}

Each of Alice's and Bob's measurements commute with each other as they all act on distinct quantum systems.
We can thus define the total measurement map
\begin{equation}
    \lM_{AB \rightarrow VWXY|S^\Theta} = \lM_{A \rightarrow X|S^\Theta} ^\lI \circ \lM_{A \rightarrow V|S^\Theta} ^{\bar \lI} \circ \lM_{B \rightarrow Y} ^\lI \circ \lM_{B \rightarrow W} ^{\bar\lI}.
\end{equation}
The overall post-measurement state is denoted $\sigma_{VWXYS^\Theta}$. We analogously define the hypothetical post-measurement state $\hat{\sigma}_{VWZYS^\Theta}$.

\paragraph*{Alice's verification:}

Alice completes the verification procedure by comparing the~$V$ register to the $W$ register.
If they differ in less than $k \delta$ bits, then the test is passed.
The test is represented by a function $\comp \colon \{0,1\}^k \times \{0,1\}^k \rightarrow \{0,1\}$ defined by
\begin{equation}
    \comp(v,w) =
    \begin{cases}
    0 \quad &\textrm{if}~\omega(v \oplus w) \geq k \delta\\
    1 \quad &\textrm{else.}
    \end{cases}
\end{equation}
To record the value of this test, we use a flag $F^\comp \coloneqq \comp(v,w)$.

The import of the outcome of this comparison test is that if Bob is good at guessing Alice's information in the Hadamard basis, it is unlikely that he is good at guessing Alice's information in the computational basis.
This trade-off is represented in the uncertainty relation of~\cref{prop:unc}.

Note that we can define the post-comparison test state, since $A|_\lI$ is disjoint from $A|_{\bar \lI}$ and $B|_\lI$ is disjoint from $B|_{\bar \lI}$.
The state is denoted $\tau_{ABVWS^\Theta \mid F^\comp = 1}$.

The following proposition shows that in order to ensure that Bob's knowledge of $X$ is limited after a successful comparison test, and receiving the key, his knowledge about Alice's hypothetical Hadamard measurement outcome must be bounded below.

\begin{proposition}
\label{prop:appunc}
    Let $\epsilon \geq 0$.
    Then
    \begin{equation}
        H_{\min} ^\epsilon (X \wedge F^\comp = 1 | VWS^\Theta B')_\sigma + H_{\max} ^\epsilon (Z \wedge F^\comp = 1 | Y)_{\sigma} \geq s.
    \end{equation}
\end{proposition}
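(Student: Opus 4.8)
The plan is to derive Proposition~\ref{prop:appunc} from the entropic uncertainty relation of Proposition~\ref{prop:unc} applied to a suitably chosen tripartite state, and then to track how conditioning on the event $F^\comp = 1$ interacts with the smooth entropies on each side. First I would observe that Bob's preparation of $\rho_{ABB'}$ followed by his Hadamard measurements on $B$ (producing $Y$ and $W$), together with Alice's measurement on $A|_{\bar\lI}$ (producing $V$) and the classical seed register $S^\Theta$, define a tripartite state of the form $\rho_{A_\lI \, C \, E}$ in the language of Proposition~\ref{prop:unc}: the "A"-system is $A|_\lI$ (the $s$ qubits Alice will measure either in the computational basis, yielding $X$, or in the Hadamard basis, yielding $Z$), the side-information register for the min-entropy term is $C = VWS^\Theta B'$, and the side-information register for the max-entropy term is $E = Y$ (plus whatever is needed — here $Y$ alone, since the max-entropy guess of $Z$ is to be made from Bob's Hadamard outcome on the matching qubits). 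The two POVMs on $A|_\lI$ are the computational-basis measurement $\{M^{0,x}_{A_\lI}\}$ and the Hadamard-basis measurement $\{M^{1,z}_{A_\lI}\}$; by the overlap computation already done in the preliminaries, $c = 2^{-s}$ for these $s$-qubit measurements, so $q = \log(1/c) = s$. Proposition~\ref{prop:unc} then yields
\begin{equation}
    H_{\min}^\epsilon(X \mid VWS^\Theta B')_\sigma + H_{\max}^\epsilon(Z \mid Y)_{\hat\sigma} \geq s.
\end{equation}

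Next I would upgrade both entropy terms from ordinary conditional entropies of the post-measurement state to the "$\wedge\, F^\comp = 1$" versions that appear in the statement. The key fact is that $F^\comp$ is a deterministic classical function of registers already present, namely $V$ and $W$; so appending a register holding $F^\comp$ is a CPTP map on $C = VWS^\Theta B'$ that leaves $X$ untouched, and attaching the sub-event projector to $X$ (i.e.\ replacing the classical register $X$ by the sub-normalized register that records $X$ only on the branch $F^\comp = 1$) is handled by the standard convention for $H_{\min}^\epsilon(X \wedge \Omega \mid \cdot)$: it equals the min-entropy of the sub-normalized state $\Pi_{F^\comp=1}\,\sigma\,\Pi_{F^\comp=1}$, and min-entropy can only increase when we pass to a sub-normalized projection of the state (the defining inequality $\rho_{AB}\le 2^{-\xi}1_A\otimes\sigma_B$ is preserved under $\rho \mapsto \Pi\rho\Pi$ with a correspondingly smaller $\sigma_B$). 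Hence $H_{\min}^\epsilon(X \wedge F^\comp=1 \mid VWS^\Theta B')_\sigma \geq H_{\min}^\epsilon(X \mid VWS^\Theta B')_\sigma$. For the max-entropy side the inequality must go the other way, and indeed it does: $H_{\max}^\epsilon(Z \wedge F^\comp = 1 \mid Y)$ is defined via the sub-normalized state and is at most $H_{\max}^\epsilon(Z \mid Y)$ for the full state, since restricting to a sub-event can only shrink the support / the number of bits needed to reconstruct $Z$. Combining these two monotonicities with the displayed uncertainty relation gives exactly
\begin{equation}
    H_{\min}^\epsilon(X \wedge F^\comp=1 \mid VWS^\Theta B')_\sigma + H_{\max}^\epsilon(Z \wedge F^\comp=1 \mid Y)_\sigma \geq s.
\end{equation}

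The main obstacle I anticipate is not the uncertainty relation itself but making the bookkeeping around the sub-event states fully rigorous and consistent with whatever convention the paper has fixed for the notation $H^\epsilon_{\bullet}(X \wedge \Omega \mid \cdot)$ — in particular, checking that the smoothing balls are taken with respect to the correct (sub-normalized) trace, that $\epsilon$ lies in the admissible range $[0,\sqrt{\Tr[\cdot]})$ after the projection, and that the data-processing step (Proposition~\ref{prop:data}) is being invoked in the right direction for each term when we discard or adjoin registers such as $V$, $W$, $S^\Theta$, $B'$. A secondary subtlety is confirming that $\sigma$ and $\hat\sigma$ really are the $X$- and $Z$-measured versions of a \emph{common} tripartite pre-measurement state (so that Proposition~\ref{prop:unc} applies verbatim); this is where the commutation remark — that Alice's and Bob's measurements act on disjoint systems, so the order of measuring $A|_{\bar\lI}$, $B|_\lI$, $B|_{\bar\lI}$ relative to the $X$-vs-$Z$ choice on $A|_\lI$ is irrelevant — does the work, and I would make that explicit before quoting Proposition~\ref{prop:unc}.
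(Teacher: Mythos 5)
Your identification of the relevant tripartite state, the two POVMs on $A|_\lI$, the overlap $c = 2^{-s}$ (hence $q = s$), and the role of data processing in passing from side information $B$ to $Y$ all match the paper's setup. However, the way you introduce the event $F^\comp = 1$ contains a genuine gap. You first apply \cref{prop:unc} to the \emph{unconditioned} state to obtain $H_{\min}^{\epsilon}(X \mid VWS^\Theta B') + H_{\max}^{\epsilon}(Z \mid Y) \geq s$, and then invoke two monotonicities: $H_{\min}^{\epsilon}(X \wedge F^\comp{=}1 \mid \cdot) \geq H_{\min}^{\epsilon}(X \mid \cdot)$ and $H_{\max}^{\epsilon}(Z \wedge F^\comp{=}1 \mid Y) \leq H_{\max}^{\epsilon}(Z \mid Y)$. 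Even granting both, they do not combine to give the claim: writing the target as $A' + B' \geq s$ and the unconditioned relation as $A + B \geq s$, you have $A' \geq A$ but $B' \leq B$, so all you can conclude is $A' + B' \geq A + B'$, and $A + B'$ may be strictly smaller than $A + B$. Nothing rules out the max-entropy dropping sharply on a low-probability branch $F^\comp = 1$ while the min-entropy barely increases, so the sum is not controlled. The inequality you would actually need on the max-entropy side is $B' \geq B$, which is the opposite of the (correct) monotonicity you state.

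The paper closes this gap by applying \cref{prop:unc} \emph{directly to the conditioned state} $\tau_{ABVWS^\Theta \mid F^\comp = 1}$, with $C = VWS^\Theta B'$ and $E = S^\Theta B$. Since the uncertainty relation holds for an arbitrary (sub-normalized) tripartite state, both entropy terms then automatically carry the $\wedge\, F^\comp = 1$, and no comparison between conditioned and unconditioned entropies is ever needed. The only remaining step is the data-processing inequality (\cref{prop:data}), which gives $H_{\max}^{\epsilon}(Z \wedge F^\comp = 1 \mid S^\Theta B)_\tau \leq H_{\max}^{\epsilon}(Z \wedge F^\comp = 1 \mid Y)_{\hat\sigma}$ and so goes in the right direction for the sum. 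If you restructure your argument to condition on $F^\comp = 1$ \emph{before} invoking the uncertainty relation (this is well defined because $F^\comp$ is a function of $V$ and $W$, which arise from measurements on systems disjoint from $A|_\lI$, as you yourself note in your commutation remark), the remainder of your proof goes through.
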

    \begin{proof}
    We apply~\cref{prop:unc} to the state $\tau_{ABVWS^\Theta \mid F^\comp = 1}$.
    To do this, we equate $C = VWS^\Theta B'$ and $E = S^\Theta B$.
    Using the measurement maps $\lM_{A \rightarrow X \mid S^\Theta}$ and $\lM_{A \rightarrow Z \mid S^\Theta}$ as the POVMs and using $\{ \braket{\theta} \}$ as the projective measurement, applying $\cref{prop:unc}$ yields
    \begin{equation}
        H_{\min} ^\epsilon (X \wedge F^\comp = 1 | VWS^\Theta B')_\sigma + H_{\max} ^\epsilon (Z \wedge F^\comp = 1 | S^\Theta B)_\tau \geq s.
    \end{equation}
    We then apply the measurement map $\lM_{B \rightarrow Y \mid S^\Theta}$ and discard $S^\Theta$.
    Finally, by~\cref{prop:data}, we note that
    \begin{equation}
        H_{\max} ^\epsilon (Z \wedge F^\comp = 1 \mid S^\Theta B )_\tau \leq H_{\max} ^\epsilon (Z \wedge F^\comp = 1 \mid Y)_{\hat{\sigma}},
    \end{equation}
    which concludes the proof.
\end{proof}

In the spirit of~\cite{TL17}, we provide an upper bound for the max-entropy quantity, thus establishing a lower bound for the min-entropy quantity.

\begin{proposition}
\label{prop:maxent}
    Letting $\nu \in (0,1)$, we define
    \begin{equation} \label{eq:epsilon}
        \epsilon(\nu) \coloneqq \exp \left( \frac{-s k^2 \nu^2}{m (k+1)} \right).
    \end{equation}
    Then, for any $\nu \in (0, \frac{1}{2} - \delta]$ such that $\epsilon(\nu)^2 < \Pr[F^\comp = 1]_\sigma = \Pr[F^\comp = 1]_{\hat{\sigma}}$,
    \begin{equation}
        H_{\max} ^{\epsilon(\nu)} (Z \wedge F^\comp = 1 \mid Y)_{\hat{\sigma}} \leq s \cdot h(\delta + \nu)
    \end{equation}
    where
    \begin{equation}
        h(x) \coloneqq -x \log x - (1-x) \log(1-x).
    \end{equation}
\end{proposition}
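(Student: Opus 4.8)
The plan is to reduce \cref{prop:maxent} to a classical sampling statement about the hypothetical state $\hat{\sigma}_{VWZYS^\Theta}$, in which Alice measures \emph{all} of register $A$ in the Hadamard basis. Observe that in $\hat{\sigma}$ the registers $Z,V$ (Alice, on $\lI$ and $\bar{\lI}$) and $Y,W$ (Bob, on $\lI$ and $\bar{\lI}$) are jointly produced by measuring the two $m$-qubit halves of Bob's state $\rho_{ABB'}$ in the Hadamard basis; writing $E \in \{0,1\}^m$ for the bitwise XOR of Alice's and Bob's $m$-bit Hadamard outcomes, we get $E|_{\bar{\lI}} = V \oplus W$ and $E|_{\lI} = Z \oplus Y$, and, crucially, the set $\bar{\lI}$ (of size $k$) is a uniformly random subset determined by $\theta$ and sampled independently of $E$. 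The event $F^\comp = 1$ is $\omega(V \oplus W) < k\delta$, i.e.\ the test set carries few ``errors'', and the goal is to show that after conditioning on it, $Y$ pins down $Z$ up to a small Hamming ball.

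First I would invoke \cref{lem:stat1} (Serfling) with the variables $E_i$ and the random test set to obtain
\[
    \Pr\!\left[\, \omega(V\oplus W) \le k\delta \ \wedge\ \omega(Z\oplus Y) \ge s(\delta+\nu) \,\right]_{\hat{\sigma}} \ \le\ \exp\!\left( \frac{-2sk^2\nu^2}{m(k+1)} \right) = \epsilon(\nu)^2 .
\]
Since $\{F^\comp = 1\} \subseteq \{\omega(V\oplus W) \le k\delta\}$, this bounds $\Pr[\, \Omega \wedge F^\comp = 1 \,]_{\hat{\sigma}}$ where $\Omega$ is the event $\omega(Z \oplus Y) \ge s(\delta+\nu)$. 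Next I would consider the subnormalized state underlying $H_{\max}^{\epsilon(\nu)}(Z \wedge F^\comp = 1 \mid Y)_{\hat{\sigma}}$, namely $\hat{\sigma}$ conditioned on $F^\comp = 1$, which has trace $\Pr[F^\comp = 1]_{\hat{\sigma}}$; the hypothesis $\epsilon(\nu)^2 < \Pr[F^\comp = 1]$ is exactly what lets us apply \cref{lem:stat2}, yielding a nearby subnormalized classical–quantum state $\tilde{\sigma}$ with $\Pr[\Omega]_{\tilde{\sigma}} = 0$ and purified distance at most $\sqrt{\epsilon(\nu)^2} = \epsilon(\nu)$; discarding $V,W,S^\Theta$ (which does not increase purified distance) leaves $\tilde{\sigma}_{ZY}$ still $\epsilon(\nu)$-close.

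Finally, on $\tilde{\sigma}_{ZY}$ every $z$ in the conditional support of $Z$ given $Y = y$ satisfies $\omega(z \oplus y) < s(\delta+\nu)$, so each fibre has size at most $\sum_{j < s(\delta+\nu)} \binom{s}{j} \le 2^{s\, h(\delta+\nu)}$, using $\delta + \nu \le \tfrac12$. A short computation with the fidelity formula for classical conditional max-entropy then gives $H_{\max}(Z \mid Y)_{\tilde{\sigma}} \le \log\!\big( \max_y |\{z : \omega(z\oplus y) < s(\delta+\nu)\}| \big) \le s\, h(\delta+\nu)$, and by the definition of smooth max-entropy as an infimum over states within purified distance $\epsilon(\nu)$ we conclude $H_{\max}^{\epsilon(\nu)}(Z \wedge F^\comp = 1 \mid Y)_{\hat{\sigma}} \le s\, h(\delta+\nu)$. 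I expect the first step to be the main obstacle: making the Serfling/parameter-estimation argument rigorous in the quantum setting — justifying that, conditioned on the post-measurement classical data, $\lI$ is independent of the error string $E$, and that the subnormalized (``$\wedge F^\comp = 1$'') bookkeeping and the constants in \cref{lem:stat1} align precisely with $\epsilon(\nu)$ — this being the single-basis parameter-estimation technique borrowed from~\cite{TL17}. The remaining steps are routine given \cref{lem:stat2} and the classical formula for $H_{\max}$.
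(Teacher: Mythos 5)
Your proposal follows essentially the same route as the paper's proof: bound $\Pr[F^\comp = 1 \wedge \Omega]_{\hat\sigma}$ by $\epsilon(\nu)^2$ via \cref{lem:stat1}, smooth to a nearby state $\tilde\sigma$ with $\Pr[\Omega]_{\tilde\sigma}=0$ via \cref{lem:stat2}, and then bound $H_{\max}(Z\mid Y)_{\tilde\sigma}$ by the logarithm of the maximal Hamming-ball fibre size, which is at most $s\,h(\delta+\nu)$ for $\delta+\nu\le \tfrac12$. The only difference is that you make explicit the sampling-without-replacement setup (independence of $\bar{\lI}$ from the error string) that the paper leaves implicit; the argument is correct.
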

\begin{proof}
    Define the event
    \begin{equation}
        \Omega \coloneqq \begin{cases}
        1 \quad &\textrm{if}~\omega(Z \oplus Y) \geq s(\delta + \nu)\\
        0 \quad &\textrm{else.}
        \end{cases}
    \end{equation}
    Using~\cref{lem:stat1}, we get that
    \begin{align}
        \Pr\left[F^\comp = 1 \wedge \Omega\right]_{\hat{\sigma}} &= \Pr\left[\omega(V \oplus W) \leq k \delta \wedge \omega(Z \oplus Y) \geq s (\delta + \nu)\right]_{\sigma}\\ &\leq \epsilon(\nu)^2.
    \end{align}
    Given the state $\hat{\sigma}_{ZYF^\comp = 1}$, we use~\cref{lem:stat2} to remove the possibility of $\Omega$ and arrive at the smoothed state $\tilde{\sigma}_{ZYF^\comp}$ with $\Pr[\Omega]_{\tilde{\sigma}} = 0$ and
    \begin{equation}
        P(\hat{\sigma}_{ZYF^\comp=1}, \tilde{\sigma}_{ZYF^\comp}) \leq \epsilon(\nu).
    \end{equation}
    Since $\Pr[F^\comp = 1]_{\tilde{\sigma}} = 1$, we get that
    \begin{equation}
        H_{\max}^{\epsilon(\nu)} (Z \wedge F^\comp = 1 \mid Y)_{\hat{\sigma}} \leq H_{\max} (Z \wedge F^\comp = 1 \mid Y)_{\tilde{\sigma}} = H_{\max} (Z \mid Y)_{\tilde{\sigma}}.
    \end{equation}
    Expanding this conditional max-entropy~\cite[Sec.~4.3.2]{Tom12}, we obtain
    \begin{align}
        H_{\max} (Z \mid Y)_{\tilde{\sigma}} &= \log \left( \sum_{y \in \{0,1\}^s} \Pr[Y = y]_{\tilde{\sigma}} 2^{H_{\max} (Z \mid Y)_{\tilde{\sigma}}} \right)\\
        &\leq \max_{\substack{y \in \{0,1\}^s \\ \Pr[Y = y]_{\tilde{\sigma}} > 0}} H_{\max} (Z \mid Y = y)_{\tilde{\sigma}}\\
        &\leq \max_{\substack{y \in \{0,1\}^s \\ \Pr[Y = y]_{\tilde{\sigma}} > 0}} \log \left| \left\{ z \in \{0,1\}^s \colon \Pr[Z = z \mid Y = y]_{\tilde{\sigma}} > 0 \right\} \right|\\
        &= \max_{y \in \{0,1\}^s} \log \left| \left\{ z \in \{0,1\}^s \colon \Pr[Z = z \wedge Y = y]_{\tilde{\sigma}} > 0 \right\} \right|.
    \end{align}
    Since $\Pr[\Omega]_{\tilde{\sigma}} = 0$, we have
    \begin{align}
            \vert \{ z \in \{0,1\}^s \colon \Pr[Z = z \wedge Y = y]_{\tilde{\sigma}} > 0 \} \vert &\leq \vert \{ z \in \{0,1\}^s \colon \omega(z \oplus y) < s(\delta + \nu) \} \vert \\
        &= \sum_{\gamma = 0} ^{\lfloor s (\delta + \nu) \rfloor} \binom s \gamma.
    \end{align}
    When $\delta + \nu \leq 1/2$ (see~\cite[Sec.~1.4]{vLvdG12}), we have that $\sum_{\gamma = 0} ^{\lfloor s(\delta = \nu) \rfloor} \binom s \gamma \leq 2^{s \cdot h(\delta + \nu)}$.
\end{proof}

At this point, we use~\cref{prop:lhl}, the Leftover Hashing Lemma, to turn the min-entropy bound into a statement about how close to uniformly random the string $\tilde{X} = H_\pa (X)$ is from Bob's perspective.
We name this final state $\zeta_{\tilde{X} S F^\comp E \wedge F^\comp = 1} = \Tr_X[\lE_f (\sigma_{X S^\Theta S^{H_\ec} F^\comp } \otimes \rho_{S^{H_\pa}})]$ for the function $f \colon (X, H_\pa) \mapsto H_\pa (X)$.
We compare this to the state $\chi_{\tilde{X}} \otimes \zeta_{S F^\comp E \wedge F^\comp = 1}$ where $\chi_{\tilde{X}}$ is the fully mixed state on $\tilde{X}$.

\begin{proposition}
    Let $\epsilon(\nu)$ be as defined in \eqref{eq:epsilon}.
    Then for any $\nu \in (0, \frac{1}{2} - \delta]$ such that $\epsilon(\nu)^2 < \Pr[F^\comp = 1]_\sigma$, we have
    \begin{equation}
        \| \zeta_{\tilde{X} S F^\comp E \wedge F^\comp = 1} - \chi_{\tilde{X}} \otimes \zeta_{S F^\comp E \wedge F^\comp = 1} \|_{\Tr} \leq \frac{1}{2} 2^{- \frac{1}{2}g(\nu)} + 2 \epsilon(\nu),
    \end{equation}
    where $g(\nu) \coloneqq s(1 - h(\delta + \nu)) - n$.
\end{proposition}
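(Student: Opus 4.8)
The plan is to stack the three ingredients just assembled, in the order \cref{prop:maxent}, \cref{prop:appunc}, \cref{prop:lhl}. First I would feed the max-entropy estimate into the uncertainty relation: substituting $H_{\max}^{\epsilon(\nu)}(Z \wedge F^\comp = 1 \mid Y)_{\hat{\sigma}} \leq s\cdot h(\delta+\nu)$ from \cref{prop:maxent} into the inequality of \cref{prop:appunc} and rearranging gives
\begin{equation}
    H_{\min}^{\epsilon(\nu)}(X \wedge F^\comp = 1 \mid VWS^\Theta B')_{\sigma} \;\geq\; s - s\cdot h(\delta+\nu) \;=\; s\bigl(1 - h(\delta+\nu)\bigr).
\end{equation}
The hypothesis $\epsilon(\nu)^2 < \Pr[F^\comp = 1]_{\sigma} = \Pr[F^\comp = 1]_{\hat{\sigma}}$ is precisely what makes this legitimate: it is the condition under which \cref{prop:maxent} (and the underlying \cref{lem:stat2}) applies, and it guarantees that the smoothing parameter $\epsilon(\nu)$ is admissible for all the sub-normalized ``$\wedge\,F^\comp = 1$'' states involved.

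Then I would apply the Leftover Hashing Lemma, \cref{prop:lhl}, to the sub-normalized classical-quantum state obtained by conditioning on $F^\comp = 1$: the source register is $X$ (with values in $\{0,1\}^s$), the seed register is $S^{H_\pa}$, the family is the universal$_2$ family $\fH_\pa$ with output length $n$, and the side-information system is everything else Bob holds — the quantum residual $B'$, the other seeds $S^\Theta$ and $S^{H_\ec}$, and the classical data Alice later sends. The lemma yields
\begin{equation}
    \| \zeta_{\tilde{X} S F^\comp E \wedge F^\comp = 1} - \chi_{\tilde{X}} \otimes \zeta_{S F^\comp E \wedge F^\comp = 1} \|_{\Tr} \;\leq\; \frac{1}{2} 2^{-\frac{1}{2}\left(H_{\min}^{\epsilon(\nu)}(X \wedge F^\comp = 1 \mid \cdot) - n\right)} + 2\epsilon(\nu),
\end{equation}
where ``$\cdot$'' is the side-information register of $\zeta$; since $t \mapsto 2^{-t/2}$ is decreasing, inserting the lower bound $s(1-h(\delta+\nu))$ for the min-entropy makes the exponent equal to $-\frac{1}{2}g(\nu)$ with $g(\nu) = s(1-h(\delta+\nu)) - n$, which is exactly the claimed estimate.

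The step that needs genuine care — and which I expect to be the main obstacle — is reconciling the conditioning systems across the three propositions so that the min-entropy bound actually transfers to the state $\zeta$. The cleanest route uses the data-processing inequality (\cref{prop:data}): adjoining $S^{H_\ec}$ is free because that seed is sampled independently of $X$ in $\key$, and the same independence of $S^{H_\pa}$ from $\sigma_{X S^\Theta S^{H_\ec} F^\comp}$ is what \cref{prop:lhl} requires for its tensor structure; discarding any of $V$, $W$, $B'$ only increases the min-entropy, so whichever of them is not part of $E$ may simply be traced out; and for the classical messages Alice transmits one checks that each is an independent one-time pad from $X$'s point of view ($\msg\oplus x\oplus u$ via the fresh $u$, and $p$, $q$ via the fresh $d$, $e$), so appending them likewise leaves $H_{\min}^{\epsilon(\nu)}(X \wedge F^\comp = 1 \mid \cdot)$ unchanged. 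Once these compatibility checks are made explicit, the two displayed inequalities combine to give the proposition with no further computation.
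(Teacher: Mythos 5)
Your proposal is correct and follows essentially the same route as the paper: substitute the max-entropy bound of \cref{prop:maxent} into the uncertainty relation of \cref{prop:appunc} to obtain $H_{\min}^{\epsilon(\nu)}(X \wedge F^\comp = 1 \mid VWS^\Theta B')_\sigma \geq s(1-h(\delta+\nu))$, then apply \cref{prop:lhl}. Your additional remarks on reconciling the side-information registers via data processing and the one-time-pad structure of the classical ciphertext components are more explicit than the paper's (terse) proof but are consistent with how the paper handles these issues elsewhere.
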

\begin{proof}
    By~\cref{prop:maxent}, we see that
    \begin{equation}
        H_{\max} ^{\epsilon(\nu)} (Z \wedge F^\comp = 1 \mid Y)_\sigma \leq s \cdot h(\delta + \nu).
    \end{equation}
    Together, with~\cref{prop:appunc}, and taking $q = 1 - h(\delta + \nu)$, we get:
    \begin{equation}
        H_{\min} ^\epsilon (X \wedge F^\comp = 1 | VWS^\Theta B')_\sigma \geq sq.
    \end{equation}
    Finally, applying~\cref{prop:lhl}, we obtain the desired inequality.
\end{proof}

For the case where $\epsilon(\nu)^2 \geq \Pr[F^\comp = 1]_\sigma$, we note that the trace distance~$\| \zeta_{\tilde{X} S F^\comp E \wedge F^\comp = 1} - \chi_{\tilde{X}} \otimes \zeta_{S F^\comp E \wedge F^\comp = 1} \|_{\Tr}$ is upper bounded by $\Pr[F^\comp = 1]_\zeta$.
Hence, considering the inequality $\Pr[F^\comp = 1]_\zeta \leq \epsilon(\nu)^2 \leq \epsilon(\nu)$ results in the proof of the following corollary.

\begin{corollary}
    For any $\nu \in (0, \frac{1}{2} - \delta ]$, the following holds:
    \begin{align}
        \begin{split}
            \| \zeta_{\tilde{X} S F^\comp E \wedge F^\comp = 1} - \chi_{\tilde{X}} \otimes \zeta_{S F^\comp E \wedge F^\comp = 1} \|_{\Tr} \leq \frac{1}{2}\sqrt{2^{-s(1 - h(\delta + \nu)) + n}} + 2 \epsilon(\nu).
        \end{split}
    \end{align}
\end{corollary}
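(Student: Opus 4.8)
The plan is a short case analysis on $\Pr[F^\comp = 1]_\sigma$ versus $\epsilon(\nu)^2$, since the proposition immediately above already disposes of the ``good'' regime and only the degenerate regime needs a separate, essentially trivial, argument.

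In the case $\epsilon(\nu)^2 < \Pr[F^\comp = 1]_\sigma$, the hypotheses of the preceding proposition are met verbatim, so it yields $\| \zeta_{\tilde{X} S F^\comp E \wedge F^\comp = 1} - \chi_{\tilde{X}} \otimes \zeta_{S F^\comp E \wedge F^\comp = 1} \|_{\Tr} \leq \frac{1}{2} 2^{-\frac{1}{2} g(\nu)} + 2\epsilon(\nu)$. I would then unfold the definition $g(\nu) = s(1 - h(\delta+\nu)) - n$ to rewrite $\frac{1}{2} 2^{-\frac{1}{2} g(\nu)} = \frac{1}{2}\sqrt{2^{-s(1-h(\delta+\nu)) + n}}$, which is exactly the right-hand side in the corollary, so there is nothing further to do in this case.

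In the case $\epsilon(\nu)^2 \geq \Pr[F^\comp = 1]_\sigma$, I would argue as follows. Both subnormalized states appearing in the trace distance carry total weight $\Pr[F^\comp = 1]_\zeta$; since $\lE_f$ and the partial trace over $X$ preserve this weight, $\Pr[F^\comp = 1]_\zeta = \Pr[F^\comp = 1]_\sigma$. Hence, by the triangle inequality, the trace distance is at most $\Pr[F^\comp = 1]_\sigma \leq \epsilon(\nu)^2 \leq \epsilon(\nu) \leq 2\epsilon(\nu)$, and adding back the non-negative term $\frac{1}{2}\sqrt{2^{-s(1-h(\delta+\nu)) + n}}$ only enlarges the right-hand side, so the claimed inequality again holds. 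Combining the two cases completes the proof.

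I do not anticipate a genuine obstacle here: the analytic content (the entropic uncertainty relation of \cref{prop:unc}, Serfling's bound via \cref{lem:stat1}, and the Leftover Hashing Lemma \cref{prop:lhl}) is already packaged in the earlier propositions. The only two points requiring care are confirming the algebraic identity $\frac{1}{2} 2^{-\frac{1}{2} g(\nu)} = \frac{1}{2}\sqrt{2^{-s(1-h(\delta+\nu)) + n}}$, so that both cases produce a literally identical bound, and justifying that $\Pr[F^\comp = 1]$ is invariant under the maps relating $\sigma$ and $\zeta$, which is what permits the clean estimate by $\epsilon(\nu)$ in the degenerate case.
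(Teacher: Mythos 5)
Your proposal is correct and follows essentially the same route as the paper: in the regime $\epsilon(\nu)^2 < \Pr[F^\comp = 1]_\sigma$ one invokes the preceding proposition and unfolds $g(\nu)$, and in the complementary regime one bounds the trace distance of the subnormalized states by their common weight $\Pr[F^\comp = 1]_\zeta \leq \epsilon(\nu)^2 \leq \epsilon(\nu)$. The extra justifications you supply (the triangle inequality for subnormalized states and the weight preservation under $\lE_f$ and the partial trace) are details the paper leaves implicit, but the argument is the same.
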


Finally, we would like to translate this into a statement about $\abs{p_0 - p_1}$ in~\cref{game:epr}.

\begin{corollary}\label{cor:epr}
    The difference of probabilities 
      \begin{align}
    \vert \Pr[b' = 1 \wedge ok = 1 \mid \cref{game:epr}(0)] - \Pr[b' = 1 \wedge ok = 1 \mid \cref{game:epr}(1)] \vert
    \end{align}
     is negligible.
\end{corollary}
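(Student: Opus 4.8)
The plan is to reduce the statement to the previous corollary, which already says that once the comparison test passes, the pad $\tilde X = H_\pa(X)$ looks uniform and is decoupled from everything Bob holds. First I would fix a choice of $\nu \in (0, \frac{1}{2} - \delta]$ (to be optimized only at the very end) and unwind the event $ok = 1$ in \cref{game:epr}: by construction of Step~\ref{epr-game-step2}, the flag $ok$ is raised exactly when $\omega(y|_{\bar\lI} \oplus r|_{\bar\lI}) < k\delta$, and in the measurement picture $r|_{\bar\lI}$ is the content of $V$ while $y|_{\bar\lI}$ is the content of $W$; hence $ok = 1 \Leftrightarrow F^\comp = 1$ on the nose. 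Consequently $\Pr[b' = 1 \wedge ok = 1 \mid \cref{game:epr}(b)]$ is precisely the weight that Bob's final CPTP map $\lE$ of Step~\ref{epr-game-step3} places on the outcome $1$ when applied to the subnormalized state obtained from the post-measurement state by conditioning on $F^\comp = 1$; denote this input state $\xi^{(b)}$.

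Second, I would pin down $\xi^{(b)}$ in terms of the abstract state $\zeta$. The registers on which $\lE$ acts in Step~\ref{epr-game-step3} are $\rho_{B'}$ together with the classical data $(\msg \oplus x \oplus u, \msg_0, \theta, u, d, e, p, q, H_\pa, H_\ec)$. Everything here except the single register carrying $\msg \oplus x \oplus u$ coincides with the side information appearing in the previous corollary: $\rho_{B'}$ is part of the conditioning system of \cref{prop:appunc}, the hash seeds and auxiliary strings form $S$, and the remaining classical data is collected into $E$; note in particular that $\msg_0$ is chosen by Bob and is identical in \cref{game:epr}$(0)$ and \cref{game:epr}$(1)$, so it contributes no dependence on $b$. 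The register holding $\msg \oplus x \oplus u$ is obtained from the register $\tilde X$ of $\zeta_{\tilde X S F^\comp E \wedge F^\comp = 1}$ (which holds $x = \tilde X$) by the classical relabelling $\tilde X \mapsto \tilde X \oplus \msg \oplus u$; since $u$ already sits in $E$ (equivalently $S$), this relabelling is a genuine CPTP map $\Lambda_\msg$, and $\xi^{(b)} = \Lambda_{\msg_b}\!\left( \zeta_{\tilde X S F^\comp E \wedge F^\comp = 1} \right)$, where $\msg_0 = 0^n$ and $\msg_1 = \msg_0$.

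Third, which is the heart of the argument, I would invoke the previous corollary to get $\| \zeta_{\tilde X S F^\comp E \wedge F^\comp = 1} - \chi_{\tilde X} \otimes \zeta_{S F^\comp E \wedge F^\comp = 1} \|_{\Tr} \leq \frac{1}{2}\sqrt{2^{-s(1 - h(\delta + \nu)) + n}} + 2\epsilon(\nu)$; call this bound $\beta(\nu)$. Since trace distance does not increase under $\Lambda_{\msg_b}$, and since $\Lambda_{\msg_b}$ sends the uniform register $\chi_{\tilde X}$ to the uniform register (XOR by a string fixed conditionally on $u$ is a permutation) and acts trivially on $S F^\comp E$, we obtain $\| \xi^{(b)} - \chi_{\tilde X} \otimes \zeta_{S F^\comp E \wedge F^\comp = 1} \|_{\Tr} \leq \beta(\nu)$ for both $b \in \{0,1\}$. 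The crucial point is that the right-hand state does not depend on $b$, so the triangle inequality gives $\| \xi^{(0)} - \xi^{(1)} \|_{\Tr} \leq 2\beta(\nu)$; applying the CPTP map $\lE$ and then reading off the indicator of $b' = 1$ is a measurement, so $\vert \Pr[b' = 1 \wedge ok = 1 \mid \cref{game:epr}(0)] - \Pr[b' = 1 \wedge ok = 1 \mid \cref{game:epr}(1)] \vert \leq 2\beta(\nu)$.

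Finally I would close by fixing parameters so that $2\beta(\nu)$ is negligible in $\lambda$: one needs $s(1 - h(\delta + \nu)) - n$ to exceed $\omega(\log\lambda)$, which controls the privacy-amplification term, together with $\epsilon(\nu) = \exp\!\big(-s k^2 \nu^2 / (m(k+1))\big)$ negligible, which controls the statistical term; both hold, for instance, for a constant $\nu$ with $\delta + \nu < \frac{1}{2}$ once $s$ and $k$ are polynomially large in $\lambda$ and $s(1 - h(\delta+\nu)) \geq n + \lambda$ (the exact choices will be fixed with \cref{thm:cert-del}). The degenerate regime $\epsilon(\nu)^2 \geq \Pr[F^\comp = 1]$ needs no separate treatment, as the previous corollary already folds it into the same bound $\beta(\nu)$. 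I expect the only real obstacle to be bookkeeping rather than anything conceptual: making the identification of registers between \cref{game:epr} and the abstract states $\sigma, \hat\sigma, \zeta$ fully rigorous — in particular tracking that $ok = 1 \Leftrightarrow F^\comp = 1$, that the subnormalized conditioned states are handled consistently throughout, and that the revealed string $u$ is placed in the side-information system so that the XOR relabelling is legitimately CPTP.
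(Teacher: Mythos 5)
Your proposal is correct and follows essentially the same route as the paper: both bound $\lVert \xi^{(0)} - \xi^{(1)} \rVert_{\Tr}$ by a triangle inequality through the $b$-independent ideal state $\chi_{\tilde X} \otimes \zeta_{S F^{\comp} E \wedge F^{\comp} = 1}$ using the preceding corollary, and then invoke the operational interpretation of trace distance to bound the difference of acceptance probabilities. Your explicit treatment of the $b$-dependence via the CPTP relabelling $\Lambda_{\msg_b}$ (XOR of the pad with the message) is a more careful rendering of a step the paper leaves implicit, but it is the same argument.
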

\begin{proof}
    Let $\zeta^b _{\tilde{X} S F^\comp E \wedge F^\comp = 1}$ be the state of $\zeta_{\tilde{X} S F^\comp E \wedge F^\comp = 1}$ in the case that $b \in \{0,1\}$ was selected at the beginning of~\cref{game:epr}.
    Note that the following trace distance is bounded above by a negligible function:
    \begin{align}
        \| \zeta^0 _{\tilde{X} S F^\comp E \wedge F^\comp = 1} - \zeta^1 &_{\tilde{X} S F^\comp E \wedge F^\comp = 1} \|_{\Tr} \\
        &\begin{aligned}
            \leq \| \zeta^0 _{\tilde{X} S F^\comp E \wedge F^\comp = 1} - \chi_{\tilde{X}} \otimes \zeta_{S F^\comp E \wedge F^\comp = 1} \|_{\Tr} \\
             + \| \zeta^1 _{\tilde{X} S F^\comp E \wedge F^\comp = 1} - \chi_{\tilde{X}} \otimes \zeta_{S F^\comp E \wedge F^\comp = 1} \|_{\Tr} \\
        \end{aligned}\\
            &\leq 2\left( \frac{1}{2}\sqrt{2^{-s(1 - h(\delta + \nu)) + n}} + 2 \epsilon(\nu) \right).
    \end{align}
    Next, note the following equality:
    \begin{align}
        &\Pr[b' = 1 \wedge ok = 1 \mid \cref{game:epr}(b)] \\
        &= \sum_{\zeta} \Tr[{\zeta _{\tilde{X} S F^\comp E \wedge F^\comp = 1}}] \Pr[b' = 1 \mid \cref{game:epr}(b)]
    \end{align}
    Hence,
    \begin{align}
        &\vert \Pr[b' = 1 \wedge ok = 1 \mid \cref{game:epr}(0)] - \Pr[b' = 1 \wedge ok = 1 \mid \cref{game:epr}(1)] \vert\\
        &\leq \sum_{\zeta} \Tr[{\zeta _{\tilde{X} S F^\comp E \wedge F^\comp = 1}}] \|\zeta^0 _{\tilde{X} S F^\comp E \wedge F^\comp = 1} - \zeta^1 _{\tilde{X} S F^\comp E \wedge F^\comp = 1} \|_{\Tr}\\
        &\leq \sum_{\zeta} 2 \Tr[{\zeta _{\tilde{X} S F^\comp E \wedge F^\comp = 1}}]  \left( \frac{1}{2}\sqrt{2^{-s(1 - h(\delta + \nu)) + n}} + 2 \epsilon(\nu) \right)\\
        &=2\left( \frac{1}{2}\sqrt{2^{-s(1 - h(\delta + \nu)) + n}} + 2 \epsilon(\nu) \right).
     \end{align}
    The conclusion follows from convexity and the physical interpretation of the trace distance (see \cref{sec:prelim}).
    In particular, the difference in probabilities of obtaining the measurement outcome $b' = 1$ given states $\zeta^0$ and $\zeta^1$ is bounded above by the aforementioned trace distance.
\end{proof}

\subsection{Security Reduction}\label{ss:secred}

We now show that the security of~\cref{game:pm} can be reduced to that of~\cref{game:epr}.
In order to do so, we construct a sequence of games starting at~\cref{game:pm} and ending at~\cref{game:epr}, and show that each transformation can only increase the advantage in distinguishing the case of $b=0$ from the case of $b=1$.

For a game $G$, let $\mathsf{Adv}(G) = \abs{p_0 - p_1}$ be the \emph{advantage}, as defined in \Cref{eq:cdsec}.

\begin{proposition}\label{prop:pm}
    \begin{equation}
        \mathsf{Adv}(\cref{game:pm}) \leq \mathsf{Adv}(\cref{game:epr}). 
    \end{equation}
\end{proposition}
\begin{proof}
    Let $G$ be a game like~\cref{game:pm} except that in~$G$, we run
    \begin{equation}
    \lA_1 (\ketbra{r^\theta}_T \otimes \ketbra{\alpha_1, \alpha_2, \alpha_3}_T  \otimes \rho_S )\,,
    \end{equation} where $\alpha_1, \alpha_2, \alpha_3$ are uniformly random bit strings of the appropriate length. 
    Verification is performed as usual, and if $ok = 1$, we run $\lA_2$ on a state containing $r|_{\bar{\lI}}, \theta, \msg \oplus x \oplus \alpha_1, H_{\ec} (r|_{\lI}) \oplus \alpha_2, \synd(r|_\lI) \oplus \alpha_3, H_{\pa}, H_{\ec}$ along with $\rho'_S \otimes \rho_{T'}$.
    By a change of variable, $\mathsf{Adv}(\cref{game:pm}) = \mathsf{Adv}(G)$.

    Next, we obtain $G'$ from $G$ by defining a new adversary  $\lA'_1$ which is like $\lA'_1$, but only receives part of register $T$. Thus we run
    \begin{equation}
    \lA'_1 (\ketbra{r^\theta}_T  \otimes \rho_S )\,,
    \end{equation}
    and to compensate, we directly give $\lA'_2$ the information that was previously hidden by the $\alpha$ values: we run $\lA'_2$ on a state containing $r|_{\bar{\lI}}, \theta, \msg \oplus x, H_{\ec} (r|_\lI), synd(r|_\lI), H_{\pa}, H_{\ec}$ together with $\rho'_S \otimes \rho_{T'}$.
    Then $\mathsf{Adv}(G) \leq \mathsf{Adv}(G')$, since an adversary $\lA'$ for $G'$ can simulate any adversary~$\lA$ in $G$, and win with the same advantage. To do this, $\lA'$ simply creates its own randomness for $\alpha_1, \alpha_2$ and $\alpha_3$, and adjusts the input to $\lA_2$ based on its own knowledge of  $\msg \oplus x, H_{\ec} (r|_\lI)$ and $\synd(r|_\lI)$.
    Let $G''$ be a game like~$G'$ except that, in $G''$, instead of~$\lA'_1$ being given~$\ketbra{r^\theta}$, $m$ EPR pairs are prepared, yielding quantum systems $A$ and $B$, of which the adversary $\lA'_1$ is given $B$.
    System $A$ is measured in basis $\theta$ yielding a string $r$, and $\lA'_1$ then computes
    \begin{equation}
    \ketbra{y}_D \otimes \rho'_S \otimes \rho_{T'} \leftarrow A'_1 (\rho_B  \otimes \rho_S).
    \end{equation}
    We show that, due to the measurement of system $A$, adversary $\lA'_1$ receives $\ketbra{r^\theta}$, where $r$ is uniformly random. The post-measurement state, conditioned on the measurement of system $A$ yielding outcome $r$, will be equivalent to
    \begin{align}
    \ket{\psi_r} &= \left(\sH ^\theta \ketbra{r} \sH^\theta \otimes 1_m \right) \ket{\EPR^m} \\
    &= \left( \sH^\theta \otimes 1_m \right) \left( \ketbra{r} \otimes 1_m \right) \left( 1_m \otimes \sH^\theta \right) \ket{\EPR^m} \\
    &= \sum_{\tilde r \in \{0, 1\}^m} \frac{1}{2^{m/2}} \left( \sH^\theta \ketbra{r} \ket{\tilde r} \right) \left( \sH^\theta \ket{\tilde r} \right) \\
    &= \frac{1}{2^{m/2}} \left( \sH^\theta \ket{r} \right) \left( \sH^\theta \ket{r} \right) \\
    &= \frac{1}{2^{m/2}} \ket{r^\theta} \otimes \ket{r^\theta},
    \end{align}
    which occurs with probability $\| \ket{\psi_r} \|^2 = \frac{1}{2^{m}}$. Therefore, the advantage in $G'$ is the same as the advantage in $G''$.
Let $G'''$ be a game like $G''$ except that, in $G'''$, instead of system $A$ being measured before running $\lA'_1$, system $A$ is measured after running $\lA'_1$. Then the advantage is unchanged because the measurement and $\lA_1$ act on distinct systems, and therefore commute.
    We note that $G'''$ is like~\cref{game:epr} except that, in the latter game, Bob is the party that prepares the state. Since allowing Bob to select the initial state can only increase the advantage, we get that  $\mathsf{Adv}(G''') \leq \mathsf{Adv}(\cref{game:epr})$. This concludes the proof.
\end{proof}

\begin{theorem} \label{thm:cert-del}
    \cref{sch:pmcd} is certified deletion secure.
\end{theorem}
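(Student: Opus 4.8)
The plan is to combine the ingredients already assembled: the unfolding of \cref{def:cds} into \cref{game:pm}, the security reduction of \cref{ss:secred} (\cref{prop:pm}), and the bound on the entanglement-based game (\cref{cor:epr}). Fix an arbitrary certified deletion attack $\lA = (\lA_0, \lA_1, \lA_2)$ and let $p_0, p_1$ be the associated quantities of \cref{def:cds}; it suffices to exhibit a negligible $\eta$ with $\abs{p_0 - p_1} \le \eta(\lambda)$. As recorded immediately after \cref{game:pm}, $p_b = \Pr[b' = 1 \wedge ok = 1 \mid \cref{game:pm}(b)]$, so the task is to bound this difference in \cref{game:pm}.

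First I would invoke \cref{prop:pm}, which bounds $\abs{\Pr[b' = 1 \wedge ok = 1 \mid \cref{game:epr}(0)] - p_1}$ by $\abs{\Pr[b' = 1 \wedge ok = 1 \mid \cref{game:epr}(0)] - \Pr[b' = 1 \wedge ok = 1 \mid \cref{game:epr}(1)]}$; running the \emph{same} reduction chain $G, G', G''$ with $b = 0$ in place of $b = 1$ (each step of that chain is an equality, valid for general $b$) gives moreover $p_0 = \Pr[b' = 1 \wedge ok = 1 \mid \cref{game:epr}(0)]$, where \cref{game:epr} is played against the specific induced strategy that prepares $m$ EPR pairs and then executes $\lA$ (the Hadamard measurement of $B$ being without loss of generality, as noted after \cref{game:epr}, since this Bob controls the state being measured). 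Hence $p_0$ and $p_1$ are exactly the two probabilities appearing in \cref{cor:epr}, evaluated against one and the same adversarial preparation, so $\abs{p_0 - p_1} = \abs{\Pr[b' = 1 \wedge ok = 1 \mid \cref{game:epr}(0)] - \Pr[b' = 1 \wedge ok = 1 \mid \cref{game:epr}(1)]}$.

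It remains to observe that this difference is negligible. The corollary preceding \cref{cor:epr} bounds it, for \emph{every} adversarial preparation and every $\nu \in (0, \frac{1}{2} - \delta]$, by $\frac{1}{2}\sqrt{2^{-s(1 - h(\delta + \nu)) + n}} + 2\epsilon(\nu)$ with $\epsilon(\nu) = \exp(-sk^2\nu^2/(m(k+1)))$ and $m = s + k$. Fixing $\delta$ to be a constant bounded away from $\frac{1}{2}$, $\nu$ a constant in $(0, \frac{1}{2} - \delta]$, and choosing the scheme parameters so that both $s(1 - h(\delta + \nu)) - n$ and $k$ grow at least linearly in $\lambda$ — which is compatible with the correctness and robustness requirements of \cref{sec:correctness} — makes both terms decay exponentially in $\lambda$, hence negligibly, and simultaneously yields a concrete security parameter. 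I expect essentially all of the difficulty to have been spent already: in the purification and decoupling steps behind \cref{prop:pm} and in the uncertainty-relation-plus-Leftover-Hashing argument culminating in \cref{cor:epr}. The one point to handle carefully in this final assembly is making sure the reduction identity $p_b = \Pr[b' = 1 \wedge ok = 1 \mid \cref{game:epr}(b)]$ and the bound of \cref{cor:epr} are applied to the \emph{same} induced Bob, so that the two probabilities in \cref{cor:epr} genuinely coincide with $p_0$ and $p_1$; beyond that it is a short triangle-inequality argument.
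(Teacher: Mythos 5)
Your proposal is correct and takes essentially the same route as the paper: it combines \cref{prop:pm} with \cref{cor:epr} and then chooses parameters so that the resulting bound $2\left(\frac{1}{2}\sqrt{2^{-s(1-h(\delta+\nu))+n}} + 2\epsilon(\nu)\right)$ is negligible. If anything you are more explicit than the paper on one point it leaves implicit, namely the identification $p_0 = \Pr[b' = 1 \wedge ok = 1 \mid \cref{game:epr}(0)]$ obtained by rerunning the $G, G', G''$ chain of equalities at $b = 0$ against the same induced adversary.
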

\begin{proof}
    Through a combination of~\cref{cor:epr} and~\cref{prop:pm}, we arrive at the following inequality:
    \begin{align}
        &\vert \Pr[b' = 1 \wedge ok = 1 \mid \cref{game:epr}(0)] - \Pr[b' = 1 \wedge ok = 1 \mid \cref{game:pm}(1)] \vert\\
        &\leq 2\left( \frac{1}{2}\sqrt{2^{-s(1 - h(\delta + \nu)) + n}} + 2 \epsilon(\nu) \right).
    \end{align}
    Since~\cref{game:pm} is a certified deletion attack for~\cref{sch:pmcd}, we see that~\cref{sch:pmcd} is $\eta$-certified deletion secure for
    \begin{equation}
        \eta(\lambda) = 2\left( \frac{1}{2}\sqrt{2^{-(s(\lambda))(1 - h(\delta + \nu)) + n}} + 2 \exp \left( \frac{-(s(\lambda)) (k(\lambda))^2 \nu^2}{(m(\lambda)) ((k(\lambda))+1)} \right) \right),
    \end{equation}
    which is negligible for large enough functions $s, k$.
\end{proof}

\bibliographystyle{bib/alphaarxiv}
\bibliography{bib/full,bib/quantum,bib/eur}
\end{document}